\documentclass{jpp}
\usepackage{graphicx}
\usepackage{amsmath,amsfonts,amssymb}
\usepackage{mathrsfs}
\usepackage{breqn}
\usepackage{bm}
\usepackage[numbers,square]{natbib}
\usepackage{hyperref}
\usepackage{cleveref}

\newtheorem{definition}{Definition}
\newtheorem{proposition}{Proposition}

\makeatletter
\newcommand*{\defeq}{\mathrel{\rlap{%
					 \raisebox{0.3ex}{$\m@th\cdot$}}%
					 \raisebox{-0.3ex}{$\m@th\cdot$}}%
					 =}
\makeatother

\newcommand{\ee}{\operatorname{e}}
\newcommand{\ii}{\operatorname{i}}
\newcommand{\dd}{\operatorname{d}}
\newcommand{\bj}{\mathbf{j}}
\newcommand{\bk}{\mathbf{k}}
\newcommand{\bp}{\mathbf{p}}
\newcommand{\bq}{\mathbf{q}}

\newcommand{\bv}{\mathbf{v}}
\newcommand{\bx}{\mathbf{x}}
\newcommand{\bE}{\mathbf{E}}
\newcommand{\bB}{\mathbf{B}}

\newcommand{\bb}{\mathbf{b}}

\newcommand{\vperp}{v_\perp}
\newcommand{\vpar}{v_\parallel}

\newcommand{\wc}{\omega}

\shorttitle{Bivariate kernels and nonlinear Vlasov--Maxwell response}
\shortauthor{R. Ricci}

\title{Bivariate incomplete-Bessel kernels for the first nonlinear Vlasov--Maxwell response}

\author{Roberto Ricci\aff{1}\corresp{\email{roberto.ricci@enea.it}}}

\affiliation{\aff{1}ENEA, Nuclear Department NUC-DTT, Frascati Research Center, Via E. Fermi 45, 00044 Frascati (Rome), Italy}

\begin{document}

\maketitle

\begin{abstract}
The weakly nonlinear response of a homogeneous magnetised plasma is usually written as a double cyclotron-harmonic expansion. This representation is explicit, but the repeated use of the Jacobi--Anger expansion produces long sums of Bessel functions and nested resonance denominators. In the linear problem we recently obtained an alternative formulation by keeping the Larmor phase unexpanded and by evaluating the characteristic integral in terms of the newly introduced incomplete-Bessel function \(G_\mu(z,\psi)\). In this article we apply the same idea directly to the first nonlinear Vlasov--Maxwell equation. The nonlinear source contains the linear response of an inner mode. When this linear response is written in incomplete-Bessel form, the outer characteristic integral produces a bivariate orbit-resolvent \(G_{\mu,\nu}^{(r)}(z,\psi;w,\chi)\). This function emerges therefore as the natural orbit integral generated by the first nonlinear characteristic problem. We derive the nonlinear distribution function in terms of these bivariate functions, collect the identities needed for the current projection, recover an extension of the classical double-harmonic Liu--Tripathi formula by expansion, and indicate how the nonlinear susceptibility tensor is obtained from bivariate angular contractions.

\end{abstract}

\keywords{nonlinear Vlasov-Maxwell equation, bivariate orbit-resolvents, nonlinear susceptibility}

\section{Introduction}\label{sec:introduction}

The kinetic theory of nonlinear wave coupling in magnetised plasmas is a classical problem. The standard weakly nonlinear treatment expands the Vlasov--Maxwell system about a homogeneous magnetised equilibrium and expresses the response in cyclotron harmonics. This gives the familiar picture of resonant mode coupling, but the explicit formulas rapidly become cumbersome when electromagnetic waves propagate at arbitrary angle with respect to the background magnetic field. In the first nonlinear correction one already encounters two coupled harmonic indices: one from the outer characteristic integration of the output mode and one from the inner linear response entering the nonlinear source. This is the algebraic origin of the complicated double sums appearing in Liu--Tripathi and similar calculations \citep{LiuTripathi1986,Sugaya1989ResonantWaveWave,Yoon2024NonlinearSusceptibilities}.

In the linear problem, the same cyclotron structure can be organised in a different way. Instead of expanding the Larmor phase at the beginning of the calculation, one keeps the phase along the unperturbed orbit and evaluates the characteristic integral directly. This produces the incomplete-Bessel function \(G_\mu(z,\psi)\), which replaces the cyclotron harmonic sum before the susceptibility tensor is projected \citep{Ricci2026Bessel}. This idea is closely related to the sum-rule approach of Newberger and subsequent developments \citep{Newberger1982,BakkerTemme1984,Swanson2003,QinPhillipsDavidson2007,LercheSchlickeiserTautz2008,QinPhillipsDavidson2008Response}.

The purpose of the present paper is to show that the first nonlinear problem admits an exactly analogous organisation. The first nonlinear equation has the same left-hand side as the linear equation; only the source term changes. More precisely, the nonlinear source is the Lorentz force of one mode acting on the linear response of another mode. If the inner linear response is written in its incomplete-Bessel form, then the outer characteristic integral acts on a shifted \(G_{\mu_\bq}\)-response. The resulting object is a bivariate incomplete-Bessel function
\begin{dmath*}
	G_{\mu_\bk,\mu_\bq}^{(r)}
	\left(
		z_\bk,\psi_\bk;
		z_\bq,\psi_\bq
	\right).
\end{dmath*}
The integer \(r\) records the finite harmonic shifts generated by transverse velocity factors and gyrophase derivatives in the nonlinear force.

The terminology ``bivariate incomplete-Bessel function'' is used in the structural sense: $G_{\mu,\nu}^{(r)}$ is the bivariate orbit-resolvent generated by applying a second incomplete-Bessel-type characteristic transform to the one-variable incomplete-Bessel kernel $G_\nu$. It is not meant to imply that the function is a classical two-variable Bessel function in the sense of the standard theory of multivariable Bessel functions.

This changes the exposition of the nonlinear calculation. The bivariate function appears directly from the nonlinear characteristic solution, exactly as \(G_\mu\) appears directly from the linear characteristic solution. The double-harmonic formula is then recovered by expanding the bivariate function. In this sense the traditional harmonic expression becomes a validation of the orbit formulation rather than its starting point.

The article is organised as follows. We first state the perturbative Vlasov--Maxwell problem and introduce the notation for the cyclotron orbit. We recall the linear solution in incomplete-Bessel form, using the integral definition of \(G_\mu\) as primary. We then take the nonlinear characteristic formula from the standard calculation and rewrite its source with compact differential operators. Inserting the \(G_{\mu_\bq}\)-form of the linear response produces the bivariate incomplete-Bessel function. We derive the explicit nonlinear distribution function in this notation, study the main properties of the bivariate functions, derive independently the traditional double-harmonic formula and recover the same formula by expansion. We finally describe the current projection leading to the nonlinear susceptibility tensor, by providing an explicit compact formula.

\section{Perturbative Vlasov--Maxwell problem and orbit notation}\label{sec:problem}

We consider a homogeneous plasma in a uniform background magnetic field
\begin{dmath*}
	\bB_0
	=
	B_0\bb,
\end{dmath*}
where \(\bb\) is a fixed unit vector, assumed along the $z$ direction of an otherwise arbitrarily oriented Cartesian frame. For one species, the Vlasov equation is
\begin{dmath*}
	\frac{\partial F}{\partial t}
	+
	\bv\cdot\partial_\bx F
	+
	\frac{e}{m}
	\left(
		\bE+
		\bv\times\bB
	\right)\cdot\partial_\bv F
	=
	0
\end{dmath*}.
Here $F$ is the species distribution function, \(e\) is the signed charge and \(m\) denotes the particle mass. We write
\begin{dmath*}
	\bB
	= {
	\bB_0+
	\bB^{(1)}+
	\bB^{(2)}+
	\cdots,
	\qquad
	\bE
	=
	\bE^{(1)}+
	\bE^{(2)}+
	\cdots
	}
\end{dmath*}
and
\begin{dmath*}
	F
	=
	f_0+
	f^{(1)}+
	f^{(2)}+
	\cdots.
\end{dmath*}
The equilibrium distribution \(f_0=f_0(\vperp,\vpar)\) is assumed gyrotropic. 
Then the linear problem is described by
\begin{dmath*}
	\left[\partial_t+\bv\cdot\partial_{\bx}+\frac{e}{m}(\bv\times\bB_0)\cdot\partial_{\bv}\right]f^{(1)}(\bx,\bv,t)=-\frac{e}{m}\left[\bv\times\bB_0\right]\cdot\partial_{\bv}f_0(v_\perp, v_\parallel)
\end{dmath*}
The first non linear equation considered in this work has the same structure, apart from the source term:
\begin{dmath*}
	\left[\partial_t+\bv\cdot\partial_{\bx}+\frac{e}{m}(\bv\times\bB_0)\cdot\partial_{\bv}\right]f^{(2)}(\bx,\bv,t)=-\frac{e}{m}\left[\bE^{(1)}(\bx,t)+\bv\times\bB^{(1)}(\bx,t)\right]\cdot\partial_{\bv}f^{(1)}(\bx, \bv, t)
\end{dmath*}.
At all perturbative levels, the left-hand side is the same unperturbed cyclotron operator. 
After Laplace--Fourier transforming, the equations can be solved by using the method of characteristics.

We adopt the following conventions. The unperturbed signed cyclotron frequency is
\begin{dmath*}
	\wc
	\defeq
	\frac{eB_0}{m}.
\end{dmath*}
For a Laplace--Fourier mode \((\bk,\Omega)\), the characteristic parameters are
\begin{dmath*}
	z_\bk
	\defeq {
	\frac{k_\perp v_\perp}{\wc},
	\qquad
	\mu_{\bk, \Omega}
	\defeq
	\frac{k_\parallel v_\parallel-\Omega}{\wc},
	\qquad
	\psi_\bk
	\defeq
	\phi-\phi_\bk
	}
\end{dmath*}.
In the following, we set $\Omega = \Omega_\bk$, where $\Omega_\bk = \omega_\bk + \ii \gamma_\bk$ denotes a generally complex solution of the linear dispersion equation, assumed one-branched for simplicity.

Along the unperturbed orbit we use
\begin{dmath*}
	\lambda
	= {
	\wc\tau,
	\qquad
	\psi_\bk(\lambda)
	=
	\psi_\bk+
	\lambda
	}
\end{dmath*}.
The linear and nonlinear responses at mode \(\bk\) have the same external Larmor dressing,
\begin{dmath*}
	f_\bk^{(s)}(\bv)
	=
	\ee^{\ii z_\bk\sin\psi_\bk}
	\tilde f_\bk^{(s)}(\bv),
\end{dmath*}
where \(s=1,2,\ldots\). This factorisation is a consequence of the common cyclotron characteristic operator. It is the basic reason why the incomplete-Bessel construction can be iterated from the linear to the first nonlinear response.

We drop for simplicity the superscript from $\bE^{(1)}$ and use circular transverse components:
\begin{dmath*}
	E_{\bk,1}
	\defeq {
	\frac{E_{\bk,x}-\ii E_{\bk,y}}{\sqrt2},
	\qquad
	E_{\bk,-1}
	\defeq
	\frac{E_{\bk,x}+\ii E_{\bk,y}}{\sqrt2},
	\qquad 
	E_{\bk,0} = E_{\bk,z}
	}
\end{dmath*}.
The two velocity-space differential operators used below are
\begin{dmath*}
	\mathcal G_\bk
	\defeq
	\partial_{v_\perp}
	+
	\frac{k_\parallel}{\Omega_\bk}
	\left(
		v_\perp\partial_{v_\parallel}
		-
		v_\parallel\partial_{v_\perp}
	\right),
\end{dmath*}
and
\begin{dmath*}
	\mathcal G
	\defeq
	v_\parallel\partial_{v_\perp}
	-
	v_\perp\partial_{v_\parallel}
\end{dmath*}.

For the first nonlinear response we impose the resonance condition
\begin{dmath*}
	\bk
	= {
	\bp+
	\bq,
	\qquad
	\Omega_\bk
	=
	\Omega_\bp+
	\Omega_\bq
	}
\end{dmath*}.

The prime on sums over \(\bp\) and \(\bq\) denotes the standard exclusion of zero modes.

\section{The linear solution in incomplete-Bessel form}\label{sec:linear}

The one-variable incomplete-Bessel function introduced in \citep{Ricci2026Bessel} is primarily defined as the orbit integral
\begin{dmath*}
	G_\mu(z,\psi)
	\defeq
	\ii
	\int_0^\infty
	\ee^{-\ii z\sin(\psi+\lambda)}
	\ee^{-\ii\mu\lambda}
	\dd\lambda,
\end{dmath*}
with the usual causal prescription understood in the oscillatory integral. This function is $2\pi$-periodic in the variable $\psi$. Its harmonic representation is obtained by using the Jacobi--Anger formula \cite[Sec.~10.12, Eqs.~10.12.1--10.12.2]{NIST:DLMF} to expand the Larmor phase:
\begin{dmath*}
	G_\mu(z,\psi)
	=
	\sum_{n=-\infty}^{\infty}
	\frac{J_n(z)\ee^{-\ii n\psi}}{n+\mu}.
\end{dmath*}
This is the linear prototype: one cyclotron characteristic integral acting on one Larmor phase produces one cyclotron denominator.

The linear characteristic solution can be written as
\begin{dmath*}
	f_\bq^{(1)}(\bv)
	=
	\ee^{\ii z_\bq\sin\psi_\bq}
	\tilde f_\bq^{(1)}(\bv)
\end{dmath*},
where
\begin{dmath*}
	\tilde f_\bq^{(1)}
	=
	\frac{\ii e}{m\wc}
	\sum_{a \in \mathcal I}
		\ee^{\ii a\psi_\bq}
		G_{\mu_\bq-a}
		\left(
			z_\bq,
			\psi_\bq
		\right)
		X_{\bq,a}(v_\perp, v_\parallel)
\end{dmath*}.
Here $\mathcal I \defeq \{-1,0,1\}$ and
\begin{dmath*}
	X_{\bq,1}
	\defeq
	\left(
	\frac{\ee^{\ii\phi_\bq}}{\sqrt2}E_{\bq,1}\mathcal G_\bq
	+
	\frac{q_\perp E_{\bq,0}}{2\Omega_\bq}\mathcal G
	\right)
	f_0
\end{dmath*}
\begin{dmath*}
	X_{\bq,-1}
	\defeq
	\left(
	\frac{\ee^{-\ii\phi_\bq}}{\sqrt2}E_{\bq,-1}\mathcal G_\bq
	+
	\frac{q_\perp E_{\bq,0}}{2\Omega_\bq}\mathcal G
	\right)
	f_0
\end{dmath*}
and
\begin{dmath*}
	X_{\bq,0}
	\defeq
	E_{\bq,0}\partial_{v_\parallel} f_0.
\end{dmath*}
The shifted factors \(G_{\mu_\bq-1}\), \(G_{\mu_\bq+1}\), and \(G_{\mu_\bq}\) are the result of the linear orbit integral and multiply the corresponding velocity-space source terms.

The standard harmonic coefficients are recovered by expanding the shifted incomplete-Bessel factors. The familiar coefficient form is
\begin{dmath*}
	f_\bq^{(1)}(\bv)
	=
	\ee^{\ii z_\bq\sin\psi_\bq}
	\sum_{N=-\infty}^{\infty}
	\ee^{-\ii N\psi_\bq}
	\tilde f_{\bq,N}^{(1)}(v_\perp,v_\parallel),
\end{dmath*}
with
\begin{dmath*}
	\tilde f_{\bq,N}^{(1)}
	=
	\ii\frac{e}{m}
	\frac{1}{N\wc-(\Omega_\bq-q_\parallel v_\parallel)}
	\left[
		\left(
			J_{N+1}(z_\bq)\frac{\ee^{\ii\phi_\bq}}{\sqrt2}E_{\bq,1}
			+
			J_{N-1}(z_\bq)\frac{\ee^{-\ii\phi_\bq}}{\sqrt2}E_{\bq,-1}
		\right)
		\mathcal G_\bq
		+
		J_N(z_\bq)E_{\bq,0}
		\left(
			\partial_{v_\parallel}
			+
			\frac{N\wc}{v_\perp\Omega_\bq}\mathcal G
		\right)
	\right]f_0.
\end{dmath*}

\section{The nonlinear characteristic formula}\label{sec:nonlinear-characteristic}

The first nonlinear perturbation satisfies the same characteristic equation as the linear perturbation. For a triad \(\bk=\bp+\bq\), the nonlinear source is the field of the mode \(\bp\) acting on the linear response of the mode \(\bq\). The characteristic solution obtained in the standard calculation is
\begin{dmath*}
	f_\bk^{(2)}(\bv)
	=
	-
	\frac{e}{m\wc}
	\ee^{\ii z_\bk\sin\psi_\bk}
	\frac{1}{V}
	\sum_\bp{}'
	\sum_\bq{}'
	\delta_{\bq,\,\bk-\bp}
	\frac{1}{\Omega_\bp}
	\int_0^\infty
	\ee^{-\ii z_\bk\sin\psi_\bk(\lambda)}
	\ee^{-\ii\mu_\bk\lambda}
	\left.
		\left\{
			\ee^{\ii\psi_\bk(\lambda)}\mathcal P_{\bp,1}
			+
			\ee^{-\ii\psi_\bk(\lambda)}\mathcal P_{\bp,-1}
			+
			\mathcal P_{\bp,0}
		\right\}
		f_\bq^{(1)}
	\right|_{\bv(\lambda)}
	\dd\lambda
\end{dmath*}.
The differential blocks are independent of \(\lambda\). Their explicit form is:
\begin{dmath*}
	\mathcal P_{\bp,1}
	\defeq
	\ee^{\ii(\phi_\bk-\phi_\bp)}
	\left[
		\frac{\ee^{\ii\phi_\bp}}{\sqrt2}E_{\bp,1}
		\left(
			\Omega_\bp\mathcal G_\bp
			-
			\ii\frac{\mu_\bp\wc}{v_\perp}\partial_\phi
		\right)
		+
		E_{\bp,0}\frac{p_\perp}{2}
		\left(
			\mathcal G
			+
			\ii\frac{v_\parallel}{v_\perp}\partial_\phi
		\right)
	\right]
\end{dmath*};
\begin{dmath*}
	\mathcal P_{\bp,-1}
	\defeq
	\ee^{-\ii(\phi_\bk-\phi_\bp)}
	\left[
		\frac{\ee^{-\ii\phi_\bp}}{\sqrt2}E_{\bp,-1}
		\left(
			\Omega_\bp\mathcal G_\bp
			+
			\ii\frac{\mu_\bp\wc}{v_\perp}\partial_\phi
		\right)
		+
		E_{\bp,0}\frac{p_\perp}{2}
		\left(
			\mathcal G
			-
			\ii\frac{v_\parallel}{v_\perp}\partial_\phi
		\right)
	\right]
\end{dmath*};
\begin{dmath*}
	\mathcal P_{\bp,0}
	\defeq
	\ii p_\perp
	\frac{\ee^{-\ii\phi_\bp}E_{\bp,-1}-\ee^{\ii\phi_\bp}E_{\bp,1}}{\sqrt2}
	\partial_\phi
	+
	E_{\bp,0}\Omega_\bp\partial_{v_\parallel}
\end{dmath*}.
These differential blocks act on the full linear response $f_\bq^{(1)}$ and the resulting expression is evaluated at $\bv(\lambda)$. 
In particular, \(v_\perp\), \(v_\parallel\), \(z_\bq\), and \(\mu_\bq\) do not change; only the inner gyrophase acquires a dependence on $\lambda$ through
\begin{dmath*}
	\psi_\bq \mapsto {\psi_\bq(\lambda) = \psi_\bq + \lambda}
\end{dmath*}.

\section{Bivariate incomplete-Bessel functions}\label{sec:bivariate}

For later convenience, we anticipate the definition of the bivariate incomplete-Bessel function. A more detailed analysis of this function and the investigation of its main properties is deferred to \cref{sec:properties}.
\begin{definition}[Bivariate incomplete-Bessel function]
	For \(r\in\mathbb Z\), define
	\begin{dmath}\label{eq:defining_integral}
		G_{\mu,\nu}^{(r)}
		\left(
			z,\psi;
			w,\chi
		\right)
		\defeq
		\ii
		\int_0^\infty
		\ee^{-\ii z\sin(\psi+\lambda)}
		\ee^{\ii w\sin(\chi+\lambda)}
		\ee^{-\ii\mu\lambda}
		\ee^{\ii r\lambda}
		G_\nu
		\left(
			w,
			\chi+\lambda
		\right)
		\dd\lambda
	\end{dmath}.
\end{definition}

This function emerges as the natural orbit-resolvent kernel associated with the first nonlinear cyclotron response. The first triplet of variables \((z,\psi,\mu)\) belongs to the outer output mode. The second triplet \((w,\chi,\nu)\) belongs to the inner linear response. The exponential factor with positive sign is the inner Larmor factor already present in the full linear distribution. Therefore \(G_{\mu,\nu}^{(r)}\) is the exact orbit integral produced when the outer characteristic acts on a full inner incomplete-Bessel response.

The elementary replacement repeated used in the following is
\begin{dmath*}
	\int_0^\infty
	\ee^{-\ii z_\bk\sin(\psi_\bk+\lambda)}
	\ee^{\ii z_\bq\sin(\psi_\bq+\lambda)}
	\ee^{-\ii\mu_\bk\lambda}
	\ee^{\ii r\lambda}
	G_{\nu_\bq}
	\left(
		z_\bq,
		\psi_\bq+\lambda
	\right)
	\dd\lambda
	=
	-
	\ii
	G_{\mu_\bk,\nu_\bq}^{(r)}
	\left(
		z_\bk,\psi_\bk;
		z_\bq,\psi_\bq
	\right)
\end{dmath*}.
This identity is the nonlinear analogue of the linear characteristic evaluation. It contains the full inner Larmor factor and therefore also accounts for the phase-gradient terms produced when the nonlinear velocity-space derivatives act on the inner linear response.


\section{Nonlinear distribution in bivariate form}\label{sec:bivariate-distribution}


We write the full inner linear response as
\begin{dmath*}
	f_\bq^{(1)}
	=
	\frac{\ii e}{m\wc}
	\sum_{a\in\mathcal I}
	K_{\bq,a}	X_{\bq,a}
\end{dmath*},
where
\begin{dmath*}
	K_{\bq,a}\left(
		z_\bq,
		\psi_\bq
	\right)
	\defeq
	\ee^{\ii z_\bq\sin\psi_\bq+\ii a\psi_\bq}
	G_{\mu_\bq-a}
	\left(
		z_\bq,
		\psi_\bq
	\right)
\end{dmath*}.
Then the characteristic formula for the nonlinear response can be written in the compact form
\begin{dmath*}
	f_\bk^{(2)}(\bv)
	=
	-
	\frac{e^2}{m^2\wc^2}
	\ee^{\ii z_\bk\sin\psi_\bk}
	\frac{1}{V}
	\sum_\bp{}'
	\sum_\bq{}'
	\delta_{\bq,\,\bk-\bp}
	\frac{1}{\Omega_\bp} \cdot \\
	\ii
	\int_0^\infty
	\ee^{-\ii z_\bk\sin(\psi_\bk+\lambda)}
	\ee^{-\ii\mu_\bk\lambda}
	\sum_{a\in\mathcal I}
	\sum_{b\in\mathcal I}
	\left.
		\ee^{\ii b\psi_\bk}
		\ee^{\ii b\lambda}
		\mathcal P_{\bp,b}
		\left(
			K_{\bq,a}
			X_{\bq,a}
		\right)
	\right|_{\bv(\lambda)}
	\dd\lambda
\end{dmath*},
where 
\begin{dmath*}
	\mathcal I \defeq \{-1,0,1\}	
\end{dmath*}.
The index \(a\in\mathcal I\) labels the inner linear block, while the index \(b\in\mathcal I\) labels the outer gyrophase block generated by the nonlinear force.

\noindent The operator acting on the inner block is
\begin{dmath*}
	\mathcal P_{\bp,b}
	\defeq
	\ee^{\ii b(\phi_\bk-\phi_\bp)}
	\Bigg\{
		E_{\bp,b}
		\left[
			b^2
			\frac{\ee^{\ii b\phi_\bp}}{\sqrt2}
			\left(
				\Omega_\bp\mathcal G_\bp
				-
				\ii b
				\frac{\mu_\bp\wc}{v_\perp}
				\partial_\phi
			\right)
			+
			(1-b^2)
			\Omega_\bp
			\partial_{v_\parallel}
		\right]
		+
		\sum_{c=\pm1}
		E_{\bp,b+c}
		\frac{1-bc}{2}
		\left[
			b^2
			\frac{p_\perp}{2}
			\left(
				\mathcal G
				-
				\ii c
				\frac{v_\parallel}{v_\perp}
				\partial_\phi
			\right)
			-
			(1-b^2)
			\ii p_\perp c\sqrt2
			\ee^{\ii c\phi_\bp}
			\partial_\phi
		\right]
	\Bigg\}
\end{dmath*}.

Equivalently, by regrouping according to velocity derivatives,
\begin{dmath}\label{eq:P_operator}
	\mathcal P_{\bp,b}
	=
	\ee^{\ii b(\phi_\bk-\phi_\bp)}
	\left[
		C_{\bp,b}^{(\phi)}
		\partial_\phi
		+
		C_{\bp,b}^{(\perp)}
		\partial_{v_\perp}
		+
		C_{\bp,b}^{(\parallel)}
		\partial_{v_\parallel}
	\right]
\end{dmath},
where
\begin{dmath*}
	C_{\bp,b}^{(\phi)}
	\defeq
	-
	\ii b^3
	E_{\bp,b}
	\frac{\ee^{\ii b\phi_\bp}}{\sqrt2}
	\frac{\mu_\bp\wc}{v_\perp}
	+
	\sum_{c=\pm1}
	E_{\bp,b+c}
	\frac{1-bc}{2}
	\left[
		-
		\ii c b^2
		\frac{p_\perp v_\parallel}{2v_\perp}
		-
		(1-b^2)
		\ii p_\perp c\sqrt2
		\ee^{\ii c\phi_\bp}
	\right]
\end{dmath*},
\begin{dmath*}
	C_{\bp,b}^{(\perp)}
	\defeq
	b^2
	E_{\bp,b}
	\frac{\ee^{\ii b\phi_\bp}}{\sqrt2}
	\left(
		\Omega_\bp
		-
		p_\parallel v_\parallel
	\right)
	+
	\sum_{c=\pm1}
	E_{\bp,b+c}
	\frac{1-bc}{2}
	b^2
	\frac{p_\perp v_\parallel}{2}
\end{dmath*},
and
\begin{dmath*}
	C_{\bp,b}^{(\parallel)}
	\defeq
	E_{\bp,b}
	\left[
		b^2
		\frac{\ee^{\ii b\phi_\bp}}{\sqrt2}
		p_\parallel v_\perp
		+
		(1-b^2)\Omega_\bp
	\right]
	-
	\sum_{c=\pm1}
	E_{\bp,b+c}
	\frac{1-bc}{2}
	b^2
	\frac{p_\perp v_\perp}{2}
\end{dmath*}.
Values of \(E_{\bp,j}\) with \(j\notin\mathcal I\) are understood to vanish.  

The product rule gives
\begin{dmath}\label{eq:differential_block}
	\mathcal P_{\bp,b}
	\left(
		K_{\bq,a}
		X_{\bq,a}
	\right)
	=
	\left(
		\mathcal P_{\bp,b}
		K_{\bq,a}
	\right)
	X_{\bq,a}
	+
	K_{\bq,a}
	\left(
		\mathcal P_{\bp,b}
		X_{\bq,a}
	\right)
\end{dmath}.
Since
\begin{dmath*}
	{
		z_\bq
		=
		\frac{q_\perp v_\perp}{\wc},
		\qquad
		\psi_\bq
		=
		\phi-\phi_\bq,
		\qquad
		\mu_\bq
		=
		\frac{q_\parallel v_\parallel-\Omega_\bq}{\wc}
	}
\end{dmath*},
one has
\begin{dmath*}
	\partial_\phi K_{\bq,a}
	=
	\partial_{\psi_\bq}K_{\bq,a}
\end{dmath*},
\begin{dmath*}
	\partial_{v_\perp}K_{\bq,a}
	=
	\frac{q_\perp}{\wc}
	\partial_{z_\bq}K_{\bq,a}
\end{dmath*},
and
\begin{dmath*}
	\partial_{v_\parallel}K_{\bq,a}
	=
	\frac{q_\parallel}{\wc}
	\ee^{\ii z_\bq\sin\psi_\bq+\ii a\psi_\bq}
	\partial_{\mu_\bq}
	G_{\mu_\bq-a}
	\left(
		z_\bq,
		\psi_\bq
	\right)
\end{dmath*}.
Along the orbit
\begin{dmath*}
	\psi_\bq
	\mapsto
	\psi_\bq+\lambda
\end{dmath*},
whereas \(z_\bq\), \(\mu_\bq\), and \(X_{\bq,a}\) are constant.  Hence, grouping together the $\lambda$-dependent terms and using the definition of \(G_{\mu,\nu}^{(r)}\) in \cref{eq:defining_integral}, we obtain the replacement rules
\begin{dmath*}
	K_{\bq,a}
	\mapsto
	\ee^{\ii a\psi_\bq}
	G_{\mu_\bk,\mu_\bq-a}^{(a+b)}
	\left(
		z_\bk,\psi_\bk;
		z_\bq,\psi_\bq
	\right)
\end{dmath*},
\begin{dmath*}
	\partial_{\psi_\bq}K_{\bq,a}
	\mapsto
	\ee^{\ii a\psi_\bq}
	\left(
		\partial_{\psi_\bq}
		+
		\ii a
	\right)
	G_{\mu_\bk,\mu_\bq-a}^{(a+b)}
	\left(
		z_\bk,\psi_\bk;
		z_\bq,\psi_\bq
	\right)
\end{dmath*},
\begin{dmath*}
	\partial_{z_\bq}K_{\bq,a}
	\mapsto
	\ee^{\ii a\psi_\bq}
	\partial_{z_\bq}
	G_{\mu_\bk,\mu_\bq-a}^{(a+b)}
	\left(
		z_\bk,\psi_\bk;
		z_\bq,\psi_\bq
	\right)
\end{dmath*},
and
\begin{dmath*}
	\ee^{\ii z_\bq\sin\psi_\bq+\ii a\psi_\bq}
	\partial_{\mu_\bq}
	G_{\mu_\bq-a}
	\left(
		z_\bq,
		\psi_\bq
	\right)
	\mapsto
	\ee^{\ii a\psi_\bq}
	\left.
		\partial_\nu
		G_{\mu_\bk,\nu}^{(a+b)}
		\left(
			z_\bk,\psi_\bk;
			z_\bq,\psi_\bq
		\right)
	\right|_{\nu=\mu_\bq-a}
\end{dmath*}.
Therefore, by introducing the shortcut notations
\begin{dmath*}
	\mathbb G_{\bk,\bq,a,b}
	\defeq
	G_{\mu_\bk,\mu_\bq-a}^{(a+b)}
	\left(
		z_\bk,\psi_\bk;
		z_\bq,\psi_\bq
	\right)
\end{dmath*},
\begin{dmath*}
	\mathbb G_{\bk,\bq,a,b}^{(\phi)}
	\defeq
	\left(
		\partial_{\psi_\bq}
		+
		\ii a
	\right)
	G_{\mu_\bk,\mu_\bq-a}^{(a+b)}
	\left(
		z_\bk,\psi_\bk;
		z_\bq,\psi_\bq
	\right)
\end{dmath*},
\begin{dmath*}
	\mathbb G_{\bk,\bq,a,b}^{(\perp)}
	\defeq
	\partial_{z_\bq}
	G_{\mu_\bk,\mu_\bq-a}^{(a+b)}
	\left(
		z_\bk,\psi_\bk;
		z_\bq,\psi_\bq
	\right)
\end{dmath*},
and
\begin{dmath*}
	\mathbb G_{\bk,\bq,a,b}^{(\parallel)}
	\defeq
	\left.
		\partial_\nu
		G_{\mu_\bk,\nu}^{(a+b)}
		\left(
			z_\bk,\psi_\bk;
			z_\bq,\psi_\bq
		\right)
	\right|_{\nu=\mu_\bq-a}
\end{dmath*}.
the reduced nonlinear distribution can be eventually written as
\begin{dmath}\label{eq:clean-bivariate-response}
	\tilde f_\bk^{(2)}
	=
	-
	\frac{e^2}{m^2\wc^2}
	\frac{1}{V}
	\sum_\bp{}'
	\sum_\bq{}'
	\delta_{\bq,\,\bk-\bp}
	\frac{1}{\Omega_\bp}
	\sum_{a\in\mathcal I}
	\sum_{b\in\mathcal I}
	\ee^{\ii b\psi_\bk}
	\ee^{\ii b(\phi_\bk-\phi_\bp)}
	\ee^{\ii a\psi_\bq} \cdot \\
	\Bigg[
		\left(
			C_{\bp,b}^{(\phi)}
			\mathbb G_{\bk,\bq,a,b}^{(\phi)}
			+
			C_{\bp,b}^{(\perp)}
			\frac{q_\perp}{\wc}
			\mathbb G_{\bk,\bq,a,b}^{(\perp)}
			+
			C_{\bp,b}^{(\parallel)}
			\frac{q_\parallel}{\wc}
			\mathbb G_{\bk,\bq,a,b}^{(\parallel)}
		\right)
		X_{\bq,a}
		+
		\mathbb G_{\bk,\bq,a,b}
		\left[
			C_{\bp,b}^{(\perp)}
			\partial_{v_\perp}
			+
			C_{\bp,b}^{(\parallel)}
			\partial_{v_\parallel}
		\right]
		X_{\bq,a}
	\Bigg]
\end{dmath}.
Only the bivariate orders \(a+b\) occur.  Since \(a,b\in\{-1,0,1\}\), these orders range from \(-2\) to \(2\).  
\section{Properties of the bivariate functions}\label{sec:properties}

We now collect the identities needed to manipulate the nonlinear response and to compare the bivariate characteristic representation with the calssical double-harmonic formula.  The bivariate incomplete-Bessel function is defined by
\begin{dmath*}
	G_{\mu,\nu}^{(r)}
	\left(
		z,\psi;
		w,\chi
	\right)
	\defeq
	\ii
	\int_0^\infty
	\ee^{-\ii z\sin(\psi+\lambda)}
	\ee^{\ii w\sin(\chi+\lambda)}
	\ee^{-\ii\mu\lambda}
	\ee^{\ii r\lambda}
	G_\nu
	\left(
		w,
		\chi+\lambda
	\right)
	\dd\lambda
\end{dmath*}.
The same causal prescription used for the one-variable incomplete-Bessel function is understood.

It is useful to introduce the dressed inner factor
\begin{dmath}\label{eq:dressed_inner_factor}
	B_\nu(w,\chi)
	\defeq
	\ee^{\ii w\sin\chi}
	G_\nu(w,\chi)
\end{dmath}.
Its Fourier expansion is
\begin{dmath}\label{eq:FT_dressed_inner_factor}
	B_\nu(w,\chi)
	=
	\sum_{\ell=-\infty}^{\infty}
	T_\ell(w,\nu)
	\ee^{-\ii\ell\chi}
\end{dmath},
where
\begin{dmath*}
	T_\ell(w,\nu)
	\defeq
	\sum_{n=-\infty}^{\infty}
	\frac{
		J_n(w)J_{n-\ell}(w)
	}{
		n+\nu
	}
\end{dmath*}
is called Turkin's function in \citep{Newberger1982}.

\begin{proposition}[Superposition of one-variable incomplete-Bessel functions]
	The bivariate function admits the representation
	\begin{dmath}\label{eq:superposition_property}
		G_{\mu,\nu}^{(r)}
		\left(
			z,\psi;
			w,\chi
		\right)
		=
		\sum_{\ell=-\infty}^{\infty}
		T_\ell(w,\nu)
		\ee^{-\ii\ell\chi}
		G_{\mu+\ell-r}
		\left(
			z,
			\psi
		\right)
	\end{dmath}.
\end{proposition}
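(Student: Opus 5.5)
The plan is to substitute the Fourier expansion of the dressed inner factor, \cref{eq:FT_dressed_inner_factor}, into the defining integral \cref{eq:defining_integral}. Then exchange sum and integral, and recognise each term as a one-variable incomplete-Bessel function.

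First observe that the integrand of \cref{eq:defining_integral} contains exactly the combination
\begin{dmath*}
	\ee^{\ii w\sin(\chi+\lambda)}G_\nu(w,\chi+\lambda)=B_\nu(w,\chi+\lambda),
\end{dmath*}
with $B_\nu$ as in \cref{eq:dressed_inner_factor}. By \cref{eq:FT_dressed_inner_factor} this equals $\sum_\ell T_\ell(w,\nu)\ee^{-\ii\ell\chi}\ee^{-\ii\ell\lambda}$.

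To justify that expansion I would insert the Jacobi--Anger series for both $\ee^{\ii w\sin\chi}=\sum_m J_m(w)\ee^{\ii m\chi}$ and $G_\nu(w,\chi)=\sum_n J_n(w)\ee^{-\ii n\chi}/(n+\nu)$. Collecting the terms with $n-m=\ell$ gives the stated $T_\ell$. The convolution converges absolutely because the Bessel coefficients decay superexponentially in the index for fixed $w$.

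Next I substitute into the integral. The $\ell$-th term becomes
\begin{dmath*}
	\ii\, T_\ell(w,\nu)\ee^{-\ii\ell\chi}\int_0^\infty \ee^{-\ii z\sin(\psi+\lambda)}\ee^{-\ii(\mu+\ell-r)\lambda}\dd\lambda
	=T_\ell(w,\nu)\ee^{-\ii\ell\chi}G_{\mu+\ell-r}(z,\psi),
\end{dmath*}
directly by the definition of $G_\mu$ in \cref{sec:linear}. Summing over $\ell$ yields \cref{eq:superposition_property}.

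The main obstacle is the interchange of the $\ell$-sum with the oscillatory $\lambda$-integral, which converges only under the causal prescription. I would make this rigorous by giving $\mu$ a negative imaginary part; this is the causal continuation, equivalent to $\operatorname{Im}\Omega>0$. The factor $\ee^{-\ii\mu\lambda}$ then decays exponentially.

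With that damping in place, the $\ell$-series for $B_\nu(w,\chi+\lambda)$ is bounded uniformly in $\lambda$ by $\sum_\ell|T_\ell|<\infty$. Dominated convergence (or Fubini) then permits the exchange.

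Both sides are analytic in $\mu$ away from the poles $\mu+\ell-r\in-\mathbb Z$. The identity therefore extends to the real axis in the same causal-limit sense in which $G_\mu$ itself is defined. The requirement $\nu\notin-\mathbb Z$ keeps $T_\ell$ finite.
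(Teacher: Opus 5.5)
Your proposal is correct and follows the paper's own argument: you substitute the Fourier series of $B_\nu(w,\chi+\lambda)$ into the defining integral and recognise each term as $T_\ell(w,\nu)\ee^{-\ii\ell\chi}G_{\mu+\ell-r}(z,\psi)$. Your derivation of the expansion from the Jacobi--Anger convolution, and your justification of term-by-term integration via $\operatorname{Im}\mu<0$ and dominated convergence, make explicit what the paper leaves to the ``causal prescription''; the only slip is the parenthetical, since $\operatorname{Im}\mu=-\operatorname{Im}\Omega/\wc$, so $\operatorname{Im}\mu<0$ is equivalent to $\operatorname{Im}\Omega>0$ only when $\wc>0$.
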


\begin{proof}
	Use the Fourier expansion of \(B_\nu\) at \(\chi+\lambda\) in the defining integral.  The remaining \(\lambda\)-integral is exactly \(G_{\mu+\ell-r}(z,\psi)\).
\end{proof}

\begin{proposition}[Harmonic expansion]
	The bivariate function has the double expansion
	\begin{dmath}\label{eq:bivariate_double_expansion}
		G_{\mu,\nu}^{(r)}
		\left(
			z,\psi;
			w,\chi
		\right)
		=
		\sum_{N=-\infty}^{\infty}
		\sum_{\ell=-\infty}^{\infty}
		\frac{
			J_N(z)
			T_\ell(w,\nu)
			\ee^{-\ii N\psi}
			\ee^{-\ii\ell\chi}
		}{
			N+\ell+\mu-r
		}
	\end{dmath}.
\end{proposition}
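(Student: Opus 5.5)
The plan is to start from the superposition representation \cref{eq:superposition_property} and substitute into each one-variable factor its harmonic (Jacobi--Anger) series. This gives
\begin{dmath*}
	G_{\mu+\ell-r}(z,\psi)=\sum_{N=-\infty}^{\infty}\frac{J_N(z)\ee^{-\ii N\psi}}{N+\mu+\ell-r}.
\end{dmath*}
Inserting this into \cref{eq:superposition_property} yields \cref{eq:bivariate_double_expansion} term by term, since the denominator \(N+(\mu+\ell-r)\) is exactly \(N+\ell+\mu-r\). The algebra is therefore a one-line substitution.

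As an independent check, and as an alternative derivation, I would also work directly from the defining integral \cref{eq:defining_integral}. The first step is to expand the outer Larmor factor with Jacobi--Anger:
\begin{dmath*}
	\ee^{-\ii z\sin(\psi+\lambda)}=\sum_N J_N(z)\ee^{-\ii N(\psi+\lambda)}.
\end{dmath*}
The second step is to expand the dressed inner factor \(B_\nu(w,\chi+\lambda)\) by \cref{eq:FT_dressed_inner_factor}. The integrand then becomes a double series of pure exponentials \(\ee^{-\ii(N+\ell+\mu-r)\lambda}\). Each term integrates under the causal prescription to
\begin{dmath*}
	\ii\int_0^\infty \ee^{-\ii(N+\ell+\mu-r)\lambda}\dd\lambda=\frac{1}{N+\ell+\mu-r}.
\end{dmath*}
This is the same elementary evaluation that produces the one-variable harmonic representation of \(G_\mu\).

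The main obstacle is justifying the interchange of the infinite sums with the \(\lambda\)-integral, and the convergence of the resulting double series. For this I would use the causal prescription concretely, taking \(\mathrm{Im}\,\mu<0\) (equivalently \(\gamma_\bk>0\)) so that \(|\ee^{-\ii\mu\lambda}|\) decays. The identity is then obtained for general \(\mu\) by analytic continuation, in the same sense as for \(G_\mu\). With this damping, the outer Jacobi--Anger series converges absolutely and uniformly in \(\lambda\), since \(\sum_N|J_N(z)|<\infty\). The inner series \(\sum_\ell T_\ell(w,\nu)\ee^{-\ii\ell(\chi+\lambda)}\) is the Fourier series of the smooth periodic function \(B_\nu\), provided \(\nu\notin\mathbb Z\). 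Its coefficients therefore decay rapidly, and it too converges absolutely and uniformly. Dominated convergence then permits exchanging the sums with the integral.

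Finally, the double series in \cref{eq:bivariate_double_expansion} converges absolutely for \(\mu\) off the real axis, because the denominators stay bounded away from zero. This makes the ordering of the \(N\) and \(\ell\) summations irrelevant, and so the route through \cref{eq:superposition_property} and the direct route agree.
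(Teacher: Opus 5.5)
Your proposal is correct and matches the paper's proof, which likewise just inserts the harmonic series $G_{\eta}(z,\psi)=\sum_N J_N(z)\ee^{-\ii N\psi}/(N+\eta)$, with $\eta=\mu+\ell-r$, into the superposition formula. Your second route, directly from the defining integral, is also sound. So is your justification for exchanging sums and integral: damping with $\mathrm{Im}\,\mu<0$, dominated convergence using $\sum_N|J_N(z)|<\infty$ and the rapid decay of $T_\ell$, and analytic continuation. The paper leaves this justification implicit.
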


\begin{proof}
	This follows from the superposition formula and from
	\begin{dmath*}
		G_{\eta}(z,\psi)
		=
		\sum_{N=-\infty}^{\infty}
		\frac{
			J_N(z)
			\ee^{-\ii N\psi}
		}{
			N+\eta
		}
	\end{dmath*}.
\end{proof}

\begin{proposition}[Outer resolvent identity]
	The bivariate function satisfies
	\begin{dmath*}
		\left(
			\ii\partial_\psi
			+
			\ii\partial_\chi
			+
			\mu
			-
			r
		\right)
		G_{\mu,\nu}^{(r)}
		\left(
			z,\psi;
			w,\chi
		\right)
		=
		\ee^{-\ii z\sin\psi}
		B_\nu(w,\chi)
	\end{dmath*}.
\end{proposition}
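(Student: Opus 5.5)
Two routes are available: differentiate the defining integral along the orbit, or act term by term on the harmonic expansion. I would use the orbit-integral route and check the result against the harmonic expansion.

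\textbf{Orbit route.} Write the integrand of the defining integral \eqref{eq:defining_integral} with the dressed inner factor \eqref{eq:dressed_inner_factor}:
\begin{dmath*}
	G_{\mu,\nu}^{(r)}(z,\psi;w,\chi)
	=
	\ii\int_0^\infty
	\ee^{-\ii z\sin(\psi+\lambda)}
	B_\nu(w,\chi+\lambda)
	\ee^{-\ii(\mu-r)\lambda}
	\dd\lambda .
\end{dmath*}
The variables \(\psi\) and \(\chi\) enter only through \(\psi+\lambda\) and \(\chi+\lambda\). So the combined operator \(\partial_\psi+\partial_\chi\) acting on the product \(\ee^{-\ii z\sin(\psi+\lambda)}B_\nu(w,\chi+\lambda)\) equals \(\partial_\lambda\) acting on it. This is the key observation, and it is the reason the identity involves the sum of the two phase derivatives.

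Next, integrate by parts in \(\lambda\), moving \(\partial_\lambda\) onto \(\ee^{-\ii(\mu-r)\lambda}\). This gives
\begin{dmath*}
	(\partial_\psi+\partial_\chi)G_{\mu,\nu}^{(r)}
	=
	\ii\left[\ee^{-\ii z\sin(\psi+\lambda)}B_\nu(w,\chi+\lambda)\ee^{-\ii(\mu-r)\lambda}\right]_0^\infty
	+\ii(\mu-r)G_{\mu,\nu}^{(r)} .
\end{dmath*}
The upper limit vanishes under the causal prescription, namely \(\operatorname{Im}\mu<0\), equivalently an outer mode with \(\gamma_\bk>0\), understood by analytic continuation otherwise. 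We also need the dressed factor to stay bounded. \(B_\nu\) is \(2\pi\)-periodic in its angle and is bounded whenever \(\nu\) is off the real integers, which holds under the same prescription applied to the inner mode. The lower limit then contributes \(-\ii\,\ee^{-\ii z\sin\psi}B_\nu(w,\chi)\).

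Multiplying by \(\ii\) gives
\begin{dmath*}
	\ii(\partial_\psi+\partial_\chi)G_{\mu,\nu}^{(r)}
	=
	\ee^{-\ii z\sin\psi}B_\nu(w,\chi)-(\mu-r)G_{\mu,\nu}^{(r)} ,
\end{dmath*}
which is the claim. The same mechanism for the one-variable case gives \((\ii\partial_\psi+\mu)G_\mu=\ee^{-\ii z\sin\psi}\), and we can use this as a sanity check.

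\textbf{Consistency check via expansion.} In the double expansion \eqref{eq:bivariate_double_expansion}, the operator \(\ii\partial_\psi+\ii\partial_\chi\) multiplies each term by \(N+\ell\). Adding \(\mu-r\) cancels the denominator exactly. What remains is
\begin{dmath*}
	\sum_N J_N(z)\ee^{-\ii N\psi}\sum_\ell T_\ell(w,\nu)\ee^{-\ii\ell\chi}
	=
	\ee^{-\ii z\sin\psi}B_\nu(w,\chi),
\end{dmath*}
by Jacobi--Anger and \eqref{eq:FT_dressed_inner_factor}. This confirms the result, modulo justifying termwise differentiation of conditionally convergent series; for that reason I would present the integral argument as the proof.

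\textbf{Main obstacle.} The only delicate step is discarding the boundary term at \(\lambda\to\infty\). I would handle it by assuming \(\operatorname{Im}(\mu-r)<0\), where the integral converges absolutely and differentiation under the integral is legitimate. The identity then extends to the physical region by the same analytic continuation that defines \(G_\mu\) in \citep{Ricci2026Bessel}.
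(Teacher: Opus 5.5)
Your argument is correct, but your main route is not the paper's. The paper proves the identity by the argument you relegate to a consistency check. In \eqref{eq:bivariate_double_expansion} the operator multiplies the $(N,\ell)$ term by $N+\ell+\mu-r$, which cancels the denominator. The remaining double sum then resums via Jacobi--Anger and \eqref{eq:FT_dressed_inner_factor} to $\ee^{-\ii z\sin\psi}B_\nu(w,\chi)$. You instead work from the primary integral definition: since $\psi$ and $\chi$ enter only through $\psi+\lambda$ and $\chi+\lambda$, the operator $\partial_\psi+\partial_\chi$ acts as $\partial_\lambda$ on the integrand, and one integration by parts leaves the $\lambda=0$ boundary value. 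This explains why the sum of the two phase derivatives appears, and it shows $G_{\mu,\nu}^{(r)}$ directly as the causal resolvent of the characteristic operator acting on the source $\ee^{-\ii z\sin\psi}B_\nu$, exactly as in the linear case. It also needs neither Turkin's coefficients nor the harmonic expansion. The cost is the boundary term at $\lambda\to\infty$: you must work at $\operatorname{Im}\mu<0$ and then continue analytically, which is harmless here because the right-hand side does not depend on $\mu$. The harmonic route has no such issue, and your reason for not relying on it is mistaken, because the series are not merely conditionally convergent. $B_\nu(w,\cdot)$ extends to an entire $2\pi$-periodic function, so both $|J_N(z)|$ and $|T_\ell(w,\nu)|$ decay faster than any exponential in $|N|$ and $|\ell|$. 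Hence, for $\mu,\nu\notin\mathbb Z$, the double series and its $\psi$- and $\chi$-derivatives converge absolutely and uniformly, and termwise differentiation is immediate. The identity then holds even for real non-resonant $\mu$, where the orbit integral exists only in a regularised sense. One small slip: $\wc$ is the signed cyclotron frequency, so $\operatorname{Im}\mu_\bk=-\gamma_\bk/\wc$, and $\operatorname{Im}\mu<0$ corresponds to $\gamma_\bk>0$ only when $\wc>0$. This does not affect the proof.
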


\begin{proof}
	In the harmonic expansion the operator multiplies the term \((N,\ell)\) by \(N+\ell+\mu-r\), cancelling the denominator.
\end{proof}

\begin{proposition}[Outer Bessel equation]
	The bivariate function satisfies the homogeneous Bessel equation in the outer variables:
	\begin{dmath*}
		\left[
			z^2\partial_z^2
			+
			z\partial_z
			+
			z^2
			+
			\partial_\psi^2
		\right]
		G_{\mu,\nu}^{(r)}
		\left(
			z,\psi;
			w,\chi
		\right)
		=
		0
	\end{dmath*}.
\end{proposition}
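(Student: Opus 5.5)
The plan is to work from the superposition property \cref{eq:superposition_property} rather than from the double series. That formula writes \(G_{\mu,\nu}^{(r)}(z,\psi;w,\chi)\) as a sum over \(\ell\) of coefficients \(T_\ell(w,\nu)\ee^{-\ii\ell\chi}\) multiplying one-variable functions \(G_{\mu+\ell-r}(z,\psi)\). The coefficients do not depend on the outer variables \((z,\psi)\), and the operator \(\mathcal B\defeq z^2\partial_z^2+z\partial_z+z^2+\partial_\psi^2\) acts only on those variables. It therefore suffices to show that \(\mathcal B G_\eta(z,\psi)=0\) for every admissible \(\eta\), and then to justify applying \(\mathcal B\) term by term in the \(\ell\)-series.

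For the one-variable kernel, I would use the harmonic representation \(G_\eta(z,\psi)=\sum_N J_N(z)\ee^{-\ii N\psi}/(N+\eta)\). On the \(N\)-th term, \(\partial_\psi^2\) acts as multiplication by \(-N^2\). The operator \(\mathcal B\) therefore reduces on that term to the Bessel operator \(z^2\partial_z^2+z\partial_z+z^2-N^2\) applied to \(J_N(z)\), which vanishes by Bessel's equation. The denominator \(N+\eta\) is an inert constant.

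As a cross-check independent of the series, I would also argue directly from the orbit integral. The \(\lambda\)-independent part of the integrand, \(\ee^{-\ii z\sin(\psi+\lambda)}\), is annihilated by \(\mathcal B\). Indeed, \(\ee^{-\ii z\sin\theta}\) is a plane wave \(\ee^{-\ii \mathbf{K}\cdot\mathbf{x}}\) written in polar coordinates \((z,\theta)\) with \(|\mathbf K|=1\). It thus satisfies \(\Delta u+u=0\), and \(z^2(\Delta+1)\) is exactly \(\mathcal B\), since \(\partial_\theta=\partial_\psi\). The factor \(\ee^{-\ii\mu\lambda}\ee^{\ii r\lambda}\ee^{\ii w\sin(\chi+\lambda)}G_\nu(w,\chi+\lambda)\) does not depend on \((z,\psi)\). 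Hence \(\mathcal B\) annihilates the integrand of \cref{eq:defining_integral} pointwise in \(\lambda\). This gives the result directly once differentiation under the integral sign is allowed.

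The main obstacle is the analytic justification of the interchange, whether in the series or under the integral. Under the causal prescription (\(\operatorname{Im}\mu<0\), or the equivalent decay that makes the integral converge), I would handle it as follows. First, the Bessel functions decay superexponentially in \(|N|\) for bounded \(z\), so the \(N\)-series and its \(z\)- and \(\psi\)-derivatives converge locally uniformly. Second, \(T_\ell(w,\nu)\) decays rapidly in \(|\ell|\) by the same Bessel bounds, while \(G_{\mu+\ell-r}\) grows at most boundedly in \(\ell\), roughly like \(1/\operatorname{dist}(\mu+\ell-r,-\mathbb Z)\). Together these give absolute and locally uniform convergence of the differentiated \(\ell\)-series away from exact resonances. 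The identity for general complex parameters then extends by analytic continuation.
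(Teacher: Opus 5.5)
Your main argument, which reduces via the superposition formula to the one-variable kernel and applies Bessel's equation to each outer harmonic $J_N(z)\ee^{-\ii N\psi}$ (with $\partial_\psi^2\to -N^2$), is essentially the paper's proof: the paper reads the identity off term by term from the double harmonic expansion. Your plane-wave cross-check on the orbit integral is also correct and gives a series-free alternative that needs only differentiation under the integral sign; note that $\ee^{-\ii z\sin(\psi+\lambda)}$ is the only factor depending on $(z,\psi)$, not a $\lambda$-independent one as you wrote.
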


\begin{proof}
	Each outer harmonic is proportional to \(J_N(z)\ee^{-\ii N\psi}\), which satisfies the corresponding Bessel equation.
\end{proof}

\begin{proposition}[Outer differential recurrence]
	The derivative with respect to the outer Larmor radius satisfies
	\begin{dmath*}
		\partial_z
		G_{\mu,\nu}^{(r)}
		\left(
			z,\psi;
			w,\chi
		\right)
		=
		\frac{1}{2}
		\left[
			\ee^{-\ii\psi}
			G_{\mu+1,\nu}^{(r)}
			\left(
				z,\psi;
				w,\chi
			\right)
			-
			\ee^{\ii\psi}
			G_{\mu-1,\nu}^{(r)}
			\left(
				z,\psi;
				w,\chi
			\right)
		\right]
	\end{dmath*}.
\end{proposition}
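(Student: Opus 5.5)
The cleanest route is to differentiate the defining integral \cref{eq:defining_integral} directly under the integral sign. The plan is to check afterwards that the result agrees with the harmonic expansion \cref{eq:bivariate_double_expansion}.

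The outer variable \(z\) enters the integrand only through the factor \(\ee^{-\ii z\sin(\psi+\lambda)}\). The inner Larmor factor \(\ee^{\ii w\sin(\chi+\lambda)}\) and \(G_\nu(w,\chi+\lambda)\) do not depend on \(z\). Differentiating therefore brings down a factor \(-\ii\sin(\psi+\lambda)\), which I would write as
\begin{dmath*}
	-\ii\sin(\psi+\lambda)
	=
	\frac{1}{2}
	\left[
		\ee^{-\ii\psi}\ee^{-\ii\lambda}
		-
		\ee^{\ii\psi}\ee^{\ii\lambda}
	\right]
\end{dmath*}.
The \(\lambda\)-independent phases \(\ee^{\mp\ii\psi}\) come out of the integral. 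The remaining factors \(\ee^{\mp\ii\lambda}\) combine with \(\ee^{-\ii\mu\lambda}\) to give \(\ee^{-\ii(\mu\pm1)\lambda}\). All other factors, namely the inner Larmor factor, \(\ee^{\ii r\lambda}\) and \(G_\nu(w,\chi+\lambda)\), are untouched. Hence each of the two resulting integrals is, by definition, \(G^{(r)}_{\mu\pm1,\nu}(z,\psi;w,\chi)\), with the overall prefactor \(\ii\) unchanged. This gives exactly the stated combination \(\tfrac12[\ee^{-\ii\psi}G^{(r)}_{\mu+1,\nu}-\ee^{\ii\psi}G^{(r)}_{\mu-1,\nu}]\).

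The step I expect to be delicate is justifying differentiation under an oscillatory integral that converges only in the causal sense. To handle it, I would take \(\operatorname{Im}\mu<0\) first, which makes the integrand exponentially damped as \(\lambda\to\infty\). The integrand is then uniformly dominated together with its \(z\)-derivative, so the interchange of derivative and integral is legitimate. The identity then extends to the physical values of \(\mu\) by the same causal continuation that defines \(G_\mu\).

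As an independent check, I would use \cref{eq:bivariate_double_expansion} together with \(J_N'(z)=\tfrac12[J_{N-1}(z)-J_{N+1}(z)]\). In the \(J_{N-1}\) part I would shift \(N\mapsto N+1\), which produces \(\ee^{-\ii\psi}J_N\ee^{-\ii N\psi}/(N+\ell+(\mu+1)-r)\), i.e. the expansion of \(\ee^{-\ii\psi}G^{(r)}_{\mu+1,\nu}\). Shifting \(N\mapsto N-1\) in the \(J_{N+1}\) part likewise yields the expansion of \(\ee^{\ii\psi}G^{(r)}_{\mu-1,\nu}\). This confirms the identity, with the inner factor \(T_\ell(w,\nu)\ee^{-\ii\ell\chi}\) acting as a mere spectator.
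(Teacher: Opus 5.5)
Your proof is correct, but your main argument takes a different route from the paper's. The paper proves the identity in one line. It inserts the superposition formula \(G^{(r)}_{\mu,\nu}=\sum_\ell T_\ell(w,\nu)\ee^{-\ii\ell\chi}G_{\mu+\ell-r}(z,\psi)\) and applies the known one-variable recurrence \(\partial_z G_\eta=\tfrac12[\ee^{-\ii\psi}G_{\eta+1}-\ee^{\ii\psi}G_{\eta-1}]\) term by term. Since \(T_\ell(w,\nu)\) does not depend on \(\mu\), the shift \(\eta\mapsto\eta\pm1\) with \(\eta=\mu+\ell-r\) is the same as \(\mu\mapsto\mu\pm1\).

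You instead differentiate the defining orbit integral directly. You use only two facts: \(z\) appears solely in the outer Larmor phase, and \(-\ii\sin(\psi+\lambda)\) splits into two phases that shift \(\mu\) by \(\pm1\). Your ``independent check'' through the double harmonic expansion is essentially the paper's argument written one level lower, with the Bessel recurrence standing in for the one-variable \(G\)-recurrence.

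What each route buys:
\begin{itemize}
\item The paper's route fits its exposition: it shows that every outer-variable identity is inherited from the linear kernel via superposition. However, it tacitly needs term-by-term differentiation of the \(\ell\)-series. It also relies on the Fourier/Turkin decomposition of the dressed inner factor and on the recurrence from the earlier linear paper.
\item Your route needs none of this, because the inner factor is never expanded. Its only analytic step is differentiation under the integral for \(\operatorname{Im}\mu<0\), and you justify it explicitly. The domination there also uses boundedness of \(G_\nu(w,\chi+\lambda)\) in \(\lambda\), which holds by its \(2\pi\)-periodicity; you could state this.
\item Your computation also carries over verbatim to the multivariate kernels sketched in the outlook, where the outer phase is again the only \(z\)-dependent factor.
\end{itemize}
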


\begin{proof}
	Use the superposition formula and the corresponding one-variable recurrence.
\end{proof}

\begin{proposition}[Inner Turkin recurrence]
	The contracted inner coefficients satisfy
	\begin{dmath*}
		(\nu+\ell)T_\ell(w,\nu)
		+
		\frac{w}{2}
		\left[
			T_{\ell-1}(w,\nu)
			+
			T_{\ell+1}(w,\nu)
		\right]
		=
		\delta_{\ell,0}
	\end{dmath*}.
	Consequently,
	\begin{dmath*}
		\left(
			\ii\partial_\chi
			+
			\nu
		\right)
		G_{\mu,\nu}^{(r)}
		\left(
			z,\psi;
			w,\chi
		\right)
		+
		\frac{w}{2}
		\left[
			\ee^{\ii\chi}
			G_{\mu,\nu}^{(r+1)}
			\left(
				z,\psi;
				w,\chi
			\right)
			+
			\ee^{-\ii\chi}
			G_{\mu,\nu}^{(r-1)}
			\left(
				z,\psi;
				w,\chi
			\right)
		\right]
		=
		G_{\mu-r}(z,\psi)
	\end{dmath*}.
\end{proposition}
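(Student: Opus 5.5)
The plan is to prove the scalar recurrence for the Turkin coefficients directly from their harmonic definition, and then transfer it to \(G_{\mu,\nu}^{(r)}\) through the superposition formula \cref{eq:superposition_property}. Throughout, \(\nu\) is taken non-integer, in accordance with the causal prescription, so that every denominator \(n+\nu\) is nonzero. Since \(J_n(w)\) decays faster than any exponential in \(|n|\), all the series below converge absolutely and may be rearranged termwise.

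\emph{Step 1 (scalar recurrence).} Shifting the index \(\ell\) by \(\mp1\) in the definition of \(T_\ell\) gives
\begin{dmath*}
T_{\ell-1}(w,\nu)+T_{\ell+1}(w,\nu)=\sum_{n=-\infty}^{\infty}\frac{J_n(w)\left[J_{n-\ell+1}(w)+J_{n-\ell-1}(w)\right]}{n+\nu}.
\end{dmath*}
I will apply the standard Bessel recurrence \(J_{m-1}(w)+J_{m+1}(w)=\frac{2m}{w}J_m(w)\) with \(m=n-\ell\). This turns the bracket into \(\frac{2(n-\ell)}{w}J_{n-\ell}(w)\), so that
\begin{dmath*}
\frac{w}{2}\left[T_{\ell-1}+T_{\ell+1}\right]=\sum_{n}\frac{(n-\ell)J_n(w)J_{n-\ell}(w)}{n+\nu}.
\end{dmath*}
Adding \((\nu+\ell)T_\ell\) makes the numerator \((n-\ell+\nu+\ell)J_nJ_{n-\ell}=(n+\nu)J_nJ_{n-\ell}\). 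The resonance denominator therefore cancels, leaving \(\sum_n J_n(w)J_{n-\ell}(w)\). By the Neumann addition theorem (DLMF 10.8/10.23), \(\sum_n J_n(w)J_{n-\ell}(w)=J_\ell(w-w)=J_\ell(0)=\delta_{\ell,0}\), which is the claimed recurrence. Equivalently, this sum is the \(\ell\)-th Fourier coefficient of \(\ee^{\ii w\sin\chi}\ee^{-\ii w\sin\chi}=1\), obtained by applying Jacobi--Anger to each factor.

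\emph{Step 2 (lift to the bivariate function).} Insert \cref{eq:superposition_property} into each term of the operator.
\begin{itemize}
\item The operator \(\ii\partial_\chi+\nu\) multiplies the \(\ell\)-th term by \(\ell+\nu\).
\item In \(\ee^{\ii\chi}G^{(r+1)}_{\mu,\nu}=\sum_\ell T_\ell\,\ee^{-\ii(\ell-1)\chi}G_{\mu+\ell-r-1}\), reindex \(\ell\to\ell+1\) to get \(\sum_\ell T_{\ell+1}\,\ee^{-\ii\ell\chi}G_{\mu+\ell-r}\).
\item In the same way, \(\ee^{-\ii\chi}G^{(r-1)}_{\mu,\nu}=\sum_\ell T_{\ell-1}\,\ee^{-\ii\ell\chi}G_{\mu+\ell-r}\).
\end{itemize}
After these shifts, all three series share the common factor \(\ee^{-\ii\ell\chi}G_{\mu+\ell-r}(z,\psi)\). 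Their combined coefficient is exactly the left-hand side of Step 1, namely \(\delta_{\ell,0}\). Only the \(\ell=0\) term survives, and it gives \(G_{\mu-r}(z,\psi)\).

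\emph{Main obstacle.} The only delicate point is justifying the termwise operations: differentiating in \(\chi\) under the \(\ell\)-sum and reindexing the three series separately before recombining them. I would handle this by showing that \(T_\ell(w,\nu)\) decays rapidly in \(|\ell|\). One route is to note that \(B_\nu(w,\cdot)\) is smooth, indeed real-analytic, in \(\chi\), so its Fourier coefficients decay rapidly. A second route is to bound \(\sum_n|J_nJ_{n-\ell}|\) directly. In addition, \(G_{\mu+\ell-r}(z,\psi)\) stays bounded uniformly in \(\ell\) once \(\mu\) is kept off the integers. Together these give absolute and uniform convergence of all the series involved. If one prefers to avoid this analysis, the identity can instead be checked on the double expansion \cref{eq:bivariate_double_expansion}, coefficient by coefficient in \((N,\ell)\), which reduces the proof to Step 1.
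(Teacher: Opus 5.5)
Your proof is correct. Step~2, the lift to \(G_{\mu,\nu}^{(r)}\), is the same as the paper's: reindex the two shifted series, multiply the recurrence by \(\ee^{-\ii\ell\chi}G_{\mu+\ell-r}(z,\psi)\) and sum over \(\ell\).

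The scalar recurrence, however, is obtained by a different route. The paper never works with the series for \(T_\ell\). It observes that the dressed factor satisfies the first-order equation
\(
(\ii\partial_\chi+\nu+w\cos\chi)B_\nu(w,\chi)=1
\),
which follows from \((\ii\partial_\chi+\nu)G_\nu=\ee^{-\ii w\sin\chi}\) and the chain rule on the Larmor factor. It then reads the recurrence off the Fourier coefficients: \(\ii\partial_\chi\) gives the factor \(\ell\), \(w\cos\chi\) couples to the neighbours \(T_{\ell\pm1}\), and the constant right-hand side gives \(\delta_{\ell,0}\).

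You work coefficientwise instead, using the three-term Bessel recurrence and the Neumann sum \(\sum_n J_n(w)J_{n-\ell}(w)=\delta_{\ell,0}\). These are exactly the Fourier-coefficient forms of \(\ii\partial_\chi\ee^{\ii w\sin\chi}=-w\cos\chi\,\ee^{\ii w\sin\chi}\) and \(\ee^{\ii w\sin\chi}\ee^{-\ii w\sin\chi}=1\). The two arguments are therefore dual versions of one computation.

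The paper's version is shorter and more structural. It shows the recurrence as the inner counterpart of the outer resolvent identity and makes the origin of the Kronecker delta transparent. Yours is self-contained at the level of the defining series and uses only tabulated Bessel identities. Unlike the paper, it also addresses the convergence needed for termwise differentiation and reindexing: rapid decay of \(T_\ell\), and uniform boundedness of \(G_{\mu+\ell-r}\) for \(\mu\) off the integers.

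A minor point: if you write the Bessel recurrence as \(\tfrac{w}{2}\left(J_{m-1}+J_{m+1}\right)=mJ_m\), you avoid dividing by \(w\) and cover \(w=0\) directly.
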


\begin{proof}
	The dressed factor satisfies
	\begin{dmath*}
		\left(
			\ii\partial_\chi
			+
			\nu
			+
			w\cos\chi
		\right)
		B_\nu(w,\chi)
		=
		1
	\end{dmath*}.
	Comparing Fourier coefficients gives the recurrence for \(T_\ell\).  The bivariate identity follows by multiplying by \(\ee^{-\ii\ell\chi}G_{\mu+\ell-r}(z,\psi)\) and summing over \(\ell\).
\end{proof}

\begin{proposition}[Phase insertions]
	For every \(s\in\mathbb Z\),
	\begin{dmath*}
		\ii
		\int_0^\infty
		\ee^{-\ii z\sin(\psi+\lambda)}
		\ee^{\ii w\sin(\chi+\lambda)}
		\ee^{-\ii\mu\lambda}
		\ee^{\ii r\lambda}
		\ee^{\ii s(\chi+\lambda)}
		G_\nu
		\left(
			w,
			\chi+\lambda
		\right)
		\dd\lambda
		=
		\ee^{\ii s\chi}
		G_{\mu,\nu}^{(r+s)}
		\left(
			z,\psi;
			w,\chi
		\right)
	\end{dmath*}.
\end{proposition}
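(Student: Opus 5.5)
The identity is an elementary rearrangement of the integrand of the defining integral \cref{eq:defining_integral}, so I would prove it by direct manipulation rather than through the harmonic expansions. The plan is to split the inserted phase as
\begin{dmath*}
	\ee^{\ii s(\chi+\lambda)}
	=
	\ee^{\ii s\chi}\,\ee^{\ii s\lambda}.
\end{dmath*}
The first factor does not depend on \(\lambda\) and can be pulled out of the integral. The second combines with \(\ee^{\ii r\lambda}\) into \(\ee^{\ii(r+s)\lambda}\).

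After this step the remaining integral is
\begin{dmath*}
	\ii\int_0^\infty
	\ee^{-\ii z\sin(\psi+\lambda)}
	\ee^{\ii w\sin(\chi+\lambda)}
	\ee^{-\ii\mu\lambda}
	\ee^{\ii(r+s)\lambda}
	G_\nu(w,\chi+\lambda)\,\dd\lambda .
\end{dmath*}
This is exactly the defining integral with \(r\) replaced by \(r+s\). Since \(r+s\in\mathbb Z\), the definition applies verbatim, and the integral equals \(G_{\mu,\nu}^{(r+s)}(z,\psi;w,\chi)\). This gives the claim.

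The only point needing care is the causal prescription: the integral is oscillatory and is understood with the same regularisation as for \(G_\mu\). Concretely, this means a factor \(\ee^{-\epsilon\lambda}\) with \(\epsilon\to0^+\), or equivalently \(\mu\) taken with a small negative imaginary part. Multiplying the integrand by the unimodular factor \(\ee^{\ii s\lambda}\) does not affect convergence of the regularised integral. Taking the constant factor \(\ee^{\ii s\chi}\) outside commutes with the limit \(\epsilon\to0^+\). So the identity holds for the regularised integrals and then in the limit.

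As a consistency check, I would also verify the identity through the superposition formula \cref{eq:superposition_property}. The inserted phase \(\ee^{\ii s(\chi+\lambda)}\) shifts the Fourier index of \(B_\nu(w,\chi+\lambda)\) from \(\ell\) to \(\ell-s\). Re-indexing then produces \(\ee^{\ii s\chi}\sum_\ell T_\ell\ee^{-\ii\ell\chi}G_{\mu+\ell-r-s}(z,\psi)\), which agrees with \(\ee^{\ii s\chi}G_{\mu,\nu}^{(r+s)}\). I expect no genuine obstacle beyond this bookkeeping of the regularisation.
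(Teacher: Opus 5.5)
Your proof is correct and follows the paper's own argument: write $\ee^{\ii s(\chi+\lambda)}=\ee^{\ii s\chi}\ee^{\ii s\lambda}$, pull the constant factor out, and absorb $\ee^{\ii s\lambda}$ into $\ee^{\ii r\lambda}$ to recover the defining integral of $G^{(r+s)}_{\mu,\nu}$. Your remarks on the causal regularisation and the check via the superposition formula are sound but go beyond what the paper's one-line proof requires.
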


\begin{proof}
	The factor \(\ee^{\ii s(\chi+\lambda)}\) gives \(\ee^{\ii s\chi}\) and shifts \(r\) to \(r+s\).
\end{proof}

\begin{proposition}[Inner derivative rules]
	The derivatives with respect to the inner variables act on the full dressed inner orbit factor:
	\begin{dmath*}
		\partial_w
		G_{\mu,\nu}^{(r)}
		=
		\ii
		\int_0^\infty
		\ee^{-\ii z\sin(\psi+\lambda)}
		\ee^{-\ii\mu\lambda}
		\ee^{\ii r\lambda}
		\partial_w
		\left[
			\ee^{\ii w\sin(\chi+\lambda)}
			G_\nu
			\left(
				w,
				\chi+\lambda
			\right)
		\right]
		\dd\lambda
	\end{dmath*},
	\begin{dmath*}
		\partial_\chi
		G_{\mu,\nu}^{(r)}
		=
		\ii
		\int_0^\infty
		\ee^{-\ii z\sin(\psi+\lambda)}
		\ee^{-\ii\mu\lambda}
		\ee^{\ii r\lambda}
		\partial_\chi
		\left[
			\ee^{\ii w\sin(\chi+\lambda)}
			G_\nu
			\left(
				w,
				\chi+\lambda
			\right)
		\right]
		\dd\lambda
	\end{dmath*},
	and
	\begin{dmath*}
		\partial_\nu
		G_{\mu,\nu}^{(r)}
		=
		\ii
		\int_0^\infty
		\ee^{-\ii z\sin(\psi+\lambda)}
		\ee^{\ii w\sin(\chi+\lambda)}
		\ee^{-\ii\mu\lambda}
		\ee^{\ii r\lambda}
		\partial_\nu
		G_\nu
		\left(
			w,
			\chi+\lambda
		\right)
		\dd\lambda
	\end{dmath*}.
\end{proposition}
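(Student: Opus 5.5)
The plan is to differentiate the defining integral \cref{eq:defining_integral} under the integral sign. The key observation is that the inner variables \((w,\chi,\nu)\) enter the integrand only through the dressed inner orbit factor
\begin{dmath*}
	\ee^{\ii w\sin(\chi+\lambda)}G_\nu(w,\chi+\lambda)=B_\nu(w,\chi+\lambda).
\end{dmath*}
This is the object introduced in \cref{eq:dressed_inner_factor}. The outer factors \(\ee^{-\ii z\sin(\psi+\lambda)}\ee^{-\ii\mu\lambda}\ee^{\ii r\lambda}\) do not depend on \(w\), \(\chi\) or \(\nu\). Hence, formally, each of \(\partial_w\), \(\partial_\chi\) and \(\partial_\nu\) passes through them and acts only on \(B_\nu(w,\chi+\lambda)\).

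For \(\partial_\nu\) the Larmor factor \(\ee^{\ii w\sin(\chi+\lambda)}\) is also \(\nu\)-independent, so only \(G_\nu\) is differentiated. This gives the third formula. For \(\partial_w\) and \(\partial_\chi\) the derivative hits both the Larmor factor and \(G_\nu\). Keeping the product \(\ee^{\ii w\sin(\chi+\lambda)}G_\nu(w,\chi+\lambda)\) intact inside the bracket gives the first two formulas exactly as stated. These are precisely the combinations needed in the replacement rules of \cref{sec:bivariate-distribution}, where \(\partial_{z_\bq}\) and \(\partial_{\psi_\bq}\) act on the full dressed \(K_{\bq,a}\).

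The only real issue is justifying the interchange of differentiation and the \(\lambda\)-integral, since the integral is oscillatory and defined with the causal prescription. I would handle this in one of two ways.

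\emph{Regularised integral.} Implement the causal prescription by giving \(\mu\) a small negative imaginary part (equivalently \(\operatorname{Im}\Omega_\bk>0\)). The integrand then carries a damping factor \(\ee^{\operatorname{Im}\mu\,\lambda}\) and decays exponentially in \(\lambda\). The derivatives of \(B_\nu(w,\chi+\lambda)\) are bounded uniformly in \(\lambda\) by \(2\pi\)-periodicity in \(\chi\) and smoothness in \(w\). Smoothness in \(\nu\) holds away from the poles \(\nu\in\mathbb Z\), and \(\nu\) lies off those poles when \(\mu_\bq\) is likewise given its causal imaginary part. Dominated convergence then licenses differentiation under the integral, and the identities extend to the boundary by analytic continuation.

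\emph{Fourier route.} Alternatively, use the expansion \cref{eq:FT_dressed_inner_factor} and the superposition property \cref{eq:superposition_property}. Differentiating term by term in \(\ell\) gives \(\partial_wT_\ell\), \(-\ii\ell T_\ell\) and \(\partial_\nu T_\ell\) multiplying \(\ee^{-\ii\ell\chi}G_{\mu+\ell-r}(z,\psi)\). These are exactly the superposition representations of the right-hand sides. Term-by-term differentiation is justified by the rapid decay of \(T_\ell\) in \(\ell\), which is inherited from the Bessel factors \(J_n(w)J_{n-\ell}(w)\).

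I expect this uniformity and convergence justification to be the main (and essentially only) obstacle; the algebra itself is immediate.
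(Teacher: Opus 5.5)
Your proposal is correct and matches the paper's argument. The paper gives no separate proof, because the three rules are simply differentiation of the defining integral \cref{eq:defining_integral} under the integral sign, and the inner variables $(w,\chi,\nu)$ enter only through the dressed factor $B_\nu(w,\chi+\lambda)$ of \cref{eq:dressed_inner_factor}; your justification of the interchange, via the causally damped integral or via term-by-term differentiation of the superposition formula \cref{eq:superposition_property}, supplies rigour that the paper leaves implicit.
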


\begin{proposition}[Double-characteristic representation]
	The bivariate function can also be written as
	\begin{dmath*}
		G_{\mu,\nu}^{(r)}
		\left(
			z,\psi;
			w,\chi
		\right)
		=
		-
		\int_0^\infty
		\int_0^\infty
		\ee^{-\ii z\sin(\psi+\lambda)}
		\ee^{\ii w\sin(\chi+\lambda)}
		\ee^{-\ii w\sin(\chi+\lambda+\sigma)}
		\ee^{-\ii(\mu-r)\lambda}
		\ee^{-\ii\nu\sigma}
		\dd\sigma
		\dd\lambda
	\end{dmath*}.
\end{proposition}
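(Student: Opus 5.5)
The plan is to substitute the integral definition of the one-variable function $G_\nu$ into the defining integral \cref{eq:defining_integral}, evaluated at the shifted inner phase $\chi+\lambda$. By the definition of $G_\mu$,
\begin{dmath*}
	G_\nu(w,\chi+\lambda)
	=
	\ii\int_0^\infty
	\ee^{-\ii w\sin(\chi+\lambda+\sigma)}
	\ee^{-\ii\nu\sigma}
	\dd\sigma .
\end{dmath*}
Inserting this into \cref{eq:defining_integral} produces the prefactor $\ii\cdot\ii=-1$. It also merges $\ee^{-\ii\mu\lambda}\ee^{\ii r\lambda}$ into $\ee^{-\ii(\mu-r)\lambda}$. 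What remains is exactly the iterated integral displayed in the statement: the outer factor $\ee^{-\ii z\sin(\psi+\lambda)}$, the inner Larmor dressing $\ee^{\ii w\sin(\chi+\lambda)}$, and the inner orbit phase $\ee^{-\ii w\sin(\chi+\lambda+\sigma)}\ee^{-\ii\nu\sigma}$. So the formal computation is a single substitution followed by regrouping of exponentials.

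The only genuine issue is justifying that the nested oscillatory integrals make sense. The order of integration must be as written: $\sigma$ innermost, then $\lambda$. I would handle this with the same causal prescription used for $G_\mu$. Give $\mu$ and $\nu$ small negative imaginary parts, i.e.\ $\mu\to\mu-\ii\epsilon$ and $\nu\to\nu-\ii\epsilon'$, which corresponds to $\operatorname{Im}\Omega>0$ in the physical variables. Then $|\ee^{-\ii\nu\sigma}|=\ee^{-\epsilon'\sigma}$ and $|\ee^{-\ii(\mu-r)\lambda}|=\ee^{-\epsilon\lambda}$. Every other factor is bounded, since $z,w,\psi,\chi$ are real.

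Under this regularisation the double integrand is absolutely integrable on $[0,\infty)^2$. Fubini therefore applies and both sides are analytic in $\mu,\nu$ in the lower half-planes. The identity then holds there. It extends to the physical values by the same analytic continuation (limit $\epsilon,\epsilon'\to0^+$) that defines $G_\mu$ and $G^{(r)}_{\mu,\nu}$ in the first place.

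As an optional cross-check, expand both phases with Jacobi--Anger. The $\sigma$-integral reproduces $\sum_n J_n(w)\ee^{-\ii n(\chi+\lambda)}/(n+\nu)$, and combining with the dressing gives $\sum_\ell T_\ell(w,\nu)\ee^{-\ii\ell(\chi+\lambda)}$. The $\lambda$-integral then returns the superposition formula \cref{eq:superposition_property}, confirming consistency.
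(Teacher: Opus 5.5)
Your proposal is correct and matches the paper's proof: substitute the orbit-integral definition of $G_\nu$ at the shifted phase $\chi+\lambda$ into \cref{eq:defining_integral}, collect $\ii\cdot\ii=-1$, and merge $\ee^{-\ii\mu\lambda}\ee^{\ii r\lambda}$ into $\ee^{-\ii(\mu-r)\lambda}$. Your regularisation with $\operatorname{Im}\mu<0$, $\operatorname{Im}\nu<0$ usefully makes explicit the causal prescription that the paper leaves implicit; note, however, that your identification of it with $\operatorname{Im}\Omega>0$ holds only for $\wc>0$, since $\operatorname{Im}\mu=-\operatorname{Im}\Omega/\wc$.
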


\begin{proof}
	Insert the one-variable characteristic representation of \(G_\nu\) into the definition of \(G_{\mu,\nu}^{(r)}\).
\end{proof}

\begin{proposition}[Degenerate one-orbit limit]
	At zero inner Larmor radius one has
	\begin{dmath*}
		G_{\mu,\nu}^{(r)}
		\left(
			z,\psi;
			0,\chi
		\right)
		=
		\frac{1}{\nu}
		G_{\mu-r}(z,\psi)
	\end{dmath*}.
\end{proposition}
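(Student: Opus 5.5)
The plan is to compute the inner Turkin coefficients at $w=0$ and then apply the superposition formula \cref{eq:superposition_property}. Since $J_n(0)=\delta_{n,0}$, the product $J_n(0)J_{n-\ell}(0)$ vanishes unless $n=0$ and $n-\ell=0$. Hence
\begin{dmath*}
	T_\ell(0,\nu)=\frac{\delta_{\ell,0}}{\nu}.
\end{dmath*}
Inserting this into \cref{eq:superposition_property}, only the $\ell=0$ term survives. It contributes $\nu^{-1}G_{\mu-r}(z,\psi)$, which is the claimed identity.

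As an independent check I would also argue directly from the defining integral \cref{eq:defining_integral}. At $w=0$ the inner Larmor factor $\ee^{\ii w\sin(\chi+\lambda)}$ equals one. The inner kernel is then
\begin{dmath*}
	G_\nu(0,\chi+\lambda)=\ii\int_0^\infty\ee^{-\ii\nu\sigma}\dd\sigma=\frac{1}{\nu},
\end{dmath*}
with the causal prescription understood. This agrees with the harmonic form of $G_\nu$, in which only the $n=0$ term remains, and it is independent of $\lambda$. Pulling the constant $1/\nu$ out of the $\lambda$-integral leaves
\begin{dmath*}
	\ii\int_0^\infty\ee^{-\ii z\sin(\psi+\lambda)}\ee^{-\ii(\mu-r)\lambda}\dd\lambda,
\end{dmath*}
which is $G_{\mu-r}(z,\psi)$ by definition. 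The double-characteristic representation gives the same result, since there the $\sigma$-integral factorises once the $w$-dependence drops out.

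The only delicate point is convergence and the interchange of limits. The Turkin series and the Fourier expansion of $B_\nu$ must be evaluated at $w=0$ termwise, and the oscillatory integrals are defined only under the causal prescription. I would handle this exactly as in the one-variable theory, by taking $\operatorname{Im}\nu<0$ and $\operatorname{Im}\mu$ suitably, so that the integrals converge absolutely, and then continuing analytically. With this convention the evaluation $T_\ell(0,\nu)=\delta_{\ell,0}/\nu$ is an exact finite statement. Because the series reduces to a single term, no resummation issue arises.
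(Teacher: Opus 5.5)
Your proposal is correct, and your second paragraph is exactly the paper's proof. At $w=0$ the dressed inner factor $B_\nu(0,\chi+\lambda)$ collapses to the constant $1/\nu$, and what remains of the defining integral is $G_{\mu-r}(z,\psi)$ because $\ee^{-\ii\mu\lambda}\ee^{\ii r\lambda}=\ee^{-\ii(\mu-r)\lambda}$.

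Your primary route, $T_\ell(0,\nu)=\delta_{\ell,0}/\nu$ inserted into the superposition formula, states the same fact in Fourier space. Those are simply the Fourier coefficients of the constant $B_\nu(0,\chi)=1/\nu$. This route is slightly less self-contained, since it depends on the superposition proposition. In exchange it serves as a consistency check of that proposition, and the interchange of sum and integral is trivial here because only one term survives.

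Your handling of the causal prescription is appropriate: $\operatorname{Im}\nu<0$ for the inner integral and $\operatorname{Im}\mu<0$ for the outer one, followed by analytic continuation.
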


\begin{proof}
	For \(w=0\), the inner dressed factor reduces to \(1/\nu\), and the defining integral becomes the one-variable incomplete-Bessel function with detuning \(\mu-r\).
\end{proof}

\section{Derivation of the double-harmonic formula}\label{sec:derivation}

We provide in the following a classical derivation of the double-harmonic formula for the nonlinear response, obtained by Fourier expanding the cyclic gyrophase-depending components before performing the orbit integral.  The result is a generalisation of the classical formula of Liu--Tripathi \citep{LiuTripathi1986} and will be used in the following for a consistency check of the correctness of the bivariate representation.

We start from the harmonic form of the inner linear response \citep{Ricci2026Bessel},
\begin{dmath*}
	f_\bq^{(1)}(\bv)
	=
	\ee^{\ii z_\bq\sin\psi_\bq}
	\sum_{M=-\infty}^{\infty}
	\ee^{-\ii M\psi_\bq}
	\tilde f_{\bq,M}^{(1)}
\end{dmath*}.
Equivalently, for each elementary block \(K_{\bq,a}X_{\bq,a}\), one may
write
\begin{dmath*}
	K_{\bq,a}
	X_{\bq,a}
	=
	\sum_{M=-\infty}^{\infty}
	\ee^{-\ii M\psi_\bq}
	K_{\bq,a,M}
	X_{\bq,a}
\end{dmath*}.
Since \(\partial_\phi=\partial_{\psi_\bq}\) on the inner response, the
action of the operator $\mathcal P_{\bp,b}$ in the form of \cref{eq:P_operator} on the \(M\)-th inner harmonic is
\begin{dmath*}
	\mathcal P_{\bp,b}
	\left(
		\ee^{-\ii M\psi_\bq}
		K_{\bq,a,M}
		X_{\bq,a}
	\right)
	=
	\ee^{\ii b(\phi_\bk-\phi_\bp)}
	\ee^{-\ii M\psi_\bq}
	\left[
		-\ii M C_{\bp,c}^{(\phi)}
		+
		C_{\bp,c}^{(\perp)}
		\partial_{v_\perp}
		+
		C_{\bp,c}^{(\parallel)}
		\partial_{v_\parallel}
	\right]
	\left(
		K_{\bq,a,M}
		X_{\bq,a}
	\right)
\end{dmath*}.
This is the essential reduction.  The gyrophase derivative in the bivariate
formula becomes multiplication by \(-\ii M\) in the harmonic formula.

Along the outer orbit,
\begin{dmath*}
	\psi_\bq
	\mapsto
	\psi_\bq+\lambda
\end{dmath*},
and the outer block labelled by \(b\) contributes the phase
\begin{dmath*}
	\ee^{\ii b\psi_\bk}
	\ee^{\ii b\lambda}
	\ee^{-\ii M(\psi_\bq+\lambda)}
	=
	\ee^{\ii b\psi_\bk}
	\ee^{-\ii M\psi_\bq}
	\ee^{-\ii(M-b)\lambda}
\end{dmath*}.
The remaining outer characteristic integral therefore selects the harmonic
denominator with index \(N=M-b\), or equivalently the Bessel factor
\(J_{N-M+b}(z_\bp)\) in the standard notation.  Thus the three values of
\(b\) give the three blocks:
\begin{align*}
		b = 1
		& \Longrightarrow 
		J_{N-M+1}(z_\bp) \\
		b = -1
		& \Longrightarrow 
		J_{N-M-1}(z_\bp) \\
		b = 0
		& \Longrightarrow 
		J_{N-M}(z_\bp)
\end{align*}

It remains only to evaluate the three coefficient combinations.  For \(b=1\)
one has
\begin{dmath*}
	-\ii M C_{\bp,1}^{(\phi)}
	+
	C_{\bp,1}^{(\perp)}
	\partial_{v_\perp}
	+
	C_{\bp,1}^{(\parallel)}
	\partial_{v_\parallel}
	=
	\frac{\ee^{\ii\phi_\bp}}{\sqrt2}E_{\bp,1}
	\left(
		\Omega_\bp\mathcal G_\bp
		-
		M\frac{\wc\mu_\bp}{v_\perp}
	\right)
	+
	\frac{p_\perp E_{\bp,0}}{2}
	\left(
		\mathcal G
		+
		M\frac{v_\parallel}{v_\perp}
	\right)
\end{dmath*}.
For \(b=-1\),
\begin{dmath*}
	-\ii M C_{\bp,-1}^{(\phi)}
	+
	C_{\bp,-1}^{(\perp)}
	\partial_{v_\perp}
	+
	C_{\bp,-1}^{(\parallel)}
	\partial_{v_\parallel}
	=
	\frac{\ee^{-\ii\phi_\bp}}{\sqrt2}E_{\bp,-1}
	\left(
		\Omega_\bp\mathcal G_\bp
		+
		M\frac{\wc\mu_\bp}{v_\perp}
	\right)
	+
	\frac{p_\perp E_{\bp,0}}{2}
	\left(
		\mathcal G
		-
		M\frac{v_\parallel}{v_\perp}
	\right)
\end{dmath*}.
Finally, for \(b=0\),
\begin{dmath*}
	-\ii M C_{\bp,0}^{(\phi)}
	+
	C_{\bp,0}^{(\perp)}
	\partial_{v_\perp}
	+
	C_{\bp,0}^{(\parallel)}
	\partial_{v_\parallel}
	=
	M
	\frac{p_\perp}{\sqrt2}
	\left(
		\ee^{-\ii\phi_\bp}E_{\bp,-1}
		-
		\ee^{\ii\phi_\bp}E_{\bp,1}
	\right)
	+
	\Omega_\bp E_{\bp,0}
	\partial_{v_\parallel}
\end{dmath*}.
Combining these three cases gives
\begin{dmath*}
	f_\bk^{(2)}(\bv)
	=
	\ee^{\ii z_\bk\sin\psi_\bk}
	\sum_{N=-\infty}^{\infty}
	\ee^{-\ii N\psi_\bk}
	\tilde f_{\bk,N}^{(2)}
\end{dmath*},
with
\begin{dmath*}
	\tilde f_{\bk,N}^{(2)}
	=
	\ii\frac{q}{m}
	\frac{1}{N\wc-(\Omega_\bk-k_\parallel v_\parallel)}
	\frac{1}{V}
	\sum_\bp{}'
	\sum_\bq{}'
	\delta_{\bq,\,\bk-\bp}
	\frac{1}{\Omega_\bp} \cdot \\
	\sum_{M=-\infty}^{\infty}
	\ee^{-\ii\left[N\phi_\bk-(N-M)\phi_\bp-M\phi_\bq\right]}
	\mathcal L_{NM}^{(\bp)}
	\tilde f_{\bq,M}^{(1)}
\end{dmath*},
where
\begin{dmath*}
	\mathcal L_{NM}^{(\bp)}
	=
	\mathcal L_{NM,1}^{(\bp)}
	+
	\mathcal L_{NM,-1}^{(\bp)}
	+
	\mathcal L_{NM,0}^{(\bp)}
\end{dmath*},
and
\begin{dmath*}
	\mathcal L_{NM,1}^{(\bp)}
	=
	J_{N-M+1}(z_\bp)
	\left[
		\frac{\ee^{\ii\phi_\bp}}{\sqrt2}E_{\bp,1}
		\left(
			\Omega_\bp\mathcal G_\bp
			-
			M\frac{\wc\mu_\bp}{v_\perp}
		\right)
		+
		\frac{p_\perp E_{\bp,0}}{2}
		\left(
			\mathcal G
			+
			M\frac{v_\parallel}{v_\perp}
		\right)
	\right]
\end{dmath*},
\begin{dmath*}
	\mathcal L_{NM,-1}^{(\bp)}
	=
	J_{N-M-1}(z_\bp)
	\left[
		\frac{\ee^{-\ii\phi_\bp}}{\sqrt2}E_{\bp,-1}
		\left(
			\Omega_\bp\mathcal G_\bp
			+
			M\frac{\wc\mu_\bp}{v_\perp}
		\right)
		+
		\frac{p_\perp E_{\bp,0}}{2}
		\left(
			\mathcal G
			-
			M\frac{v_\parallel}{v_\perp}
		\right)
	\right]
\end{dmath*},
\begin{dmath*}
	\mathcal L_{NM,0}^{(\bp)}
	=
	J_{N-M}(z_\bp)
	\left[
		\Omega_\bp E_{\bp,0}
		\partial_{v_\parallel}
		+
		M
		\frac{p_\perp}{\sqrt2}
		\left(
			\ee^{-\ii\phi_\bp}E_{\bp,-1}
			-
			\ee^{\ii\phi_\bp}E_{\bp,1}
		\right)
	\right]
\end{dmath*}.
This is the classical double-harmonic formula. In the next section we show that this formula can be directly obtained from \cref{eq:clean-bivariate-response} using the properties of the bivariate incomplete Bessel functions discussed in \cref{sec:properties}.

\section{Recovery of the traditional double-harmonic formula}\label{sec:recovery}

We now verify that the bivariate response reproduces the double-harmonic expression.  The point of the verification is to start from \cref{eq:clean-bivariate-response}, use only the identities of \cref{sec:properties}, and recover the formula independently obtained in \cref{sec:derivation}.

We first use the superposition property of the bivariate function, \cref{eq:superposition_property}.  With
\(\nu=\mu_\bq-a\), \(r=a+b\), and \(\ell=m+a\), it gives
\begin{dmath*}
	\ee^{\ii a\psi_\bq}
	G_{\mu_\bk,\mu_\bq-a}^{(a+b)}
	\left(
		z_\bk,\psi_\bk;
		z_\bq,\psi_\bq
	\right)
	=
	\sum_{m=-\infty}^{\infty}
	\ee^{-\ii m\psi_\bq}
	T_{m+a}
	\left(
		z_\bq,
		\mu_\bq-a
	\right)
	G_{\mu_\bk+m-b}
	\left(
		z_\bk,
		\psi_\bk
	\right)
\end{dmath*}.
The differentiated kernels are expanded in the same way:
\begin{dmath*}
	\ee^{\ii a\psi_\bq}
	\left(
		\partial_{\psi_\bq}
		+
		\ii a
	\right)
	G_{\mu_\bk,\mu_\bq-a}^{(a+b)}
	=
	\sum_{m=-\infty}^{\infty}
	\ee^{-\ii m\psi_\bq}
	\left(
		-\ii m
	\right)
	T_{m+a}
	\left(
		z_\bq,
		\mu_\bq-a
	\right)
	G_{\mu_\bk+m-b}
	\left(
		z_\bk,
		\psi_\bk
	\right)
\end{dmath*},
\begin{dmath*}
	\ee^{\ii a\psi_\bq}
	\partial_{z_\bq}
	G_{\mu_\bk,\mu_\bq-a}^{(a+b)}
	=
	\sum_{m=-\infty}^{\infty}
	\ee^{-\ii m\psi_\bq}
	\partial_{z_\bq}
	T_{m+a}
	\left(
		z_\bq,
		\mu_\bq-a
	\right)
	G_{\mu_\bk+m-b}
	\left(
		z_\bk,
		\psi_\bk
	\right)
\end{dmath*},
and
\begin{dmath*}
	\ee^{\ii a\psi_\bq}
	\left.
		\partial_\nu
		G_{\mu_\bk,\nu}^{(a+b)}
	\right|_{\nu=\mu_\bq-a}
	=
	\sum_{m=-\infty}^{\infty}
	\ee^{-\ii m\psi_\bq}
	\partial_{\mu_\bq}
	T_{m+a}
	\left(
		z_\bq,
		\mu_\bq-a
	\right)
	G_{\mu_\bk+m-b}
	\left(
		z_\bk,
		\psi_\bk
	\right)
\end{dmath*}.
Thus the sum over \(a\) in \cref{eq:clean-bivariate-response} reconstructs
the \(m\)-th harmonic of the full inner linear response.  More explicitly, if
\begin{dmath*}
	Y_{\bq,m}
	\defeq
	\sum_{a\in\mathcal I}
	T_{m+a}
	\left(
		z_\bq,
		\mu_\bq-a
	\right)
	X_{\bq,a}
\end{dmath*},
then
\begin{dmath*}
	f_{\bq,m}^{(1)}
	=
	\frac{\ii e}{m\wc}
	Y_{\bq,m}
\end{dmath*}.
The differentiated bivariate blocks reconstruct the corresponding derivatives
of \(Y_{\bq,m}\).  Consequently the bracket in
\cref{eq:clean-bivariate-response} becomes
\begin{dmath*}
	G_{\mu_\bk+m-b}
	\left(
		z_\bk,
		\psi_\bk
	\right)
	\mathcal D_{\bp,b}^{(m)}
	Y_{\bq,m}
\end{dmath*},
where
\begin{dmath*}
	\mathcal D_{\bp,b}^{(m)}
	\defeq
	-\ii m C_{\bp,b}^{(\phi)}
	+
	C_{\bp,b}^{(\perp)}
	\partial_{v_\perp}
	+
	C_{\bp,b}^{(\parallel)}
	\partial_{v_\parallel}
\end{dmath*}.
Using \(Y_{\bq,m}=m\wc f_{\bq,m}^{(1)}/(\ii e)\), the reduced bivariate
response becomes
\begin{dmath*}
	\tilde f_\bk^{(2)}
	=
	\frac{\ii e}{m\wc}
	\frac{1}{V}
	\sum_\bp{}'
	\sum_\bq{}'
	\delta_{\bq,\,\bk-\bp}
	\frac{1}{\Omega_\bp} \cdot \\
	\sum_{b\in\mathcal I}
	\sum_{m=-\infty}^{\infty}
	\ee^{\ii b\psi_\bk}
	\ee^{\ii b(\phi_\bk-\phi_\bp)}
	\ee^{-\ii m\psi_\bq}
	G_{\mu_\bk+m-b}
	\left(
		z_\bk,
		\psi_\bk
	\right)
	\mathcal D_{\bp,b}^{(m)}
	f_{\bq,m}^{(1)}
\end{dmath*}.
Since
\begin{dmath*}
	\psi_\bq
	=
	\psi_\bk+\phi_\bk-\phi_\bq
\end{dmath*},
the phase factor satisfies
\begin{dmath*}
	\ee^{\ii b\psi_\bk}
	\ee^{-\ii m\psi_\bq}
	=
	\ee^{-\ii(m-b)\psi_\bk}
	\ee^{-\ii m(\phi_\bk-\phi_\bq)}
\end{dmath*}.
We now expand the remaining one-variable incomplete-Bessel function:
\begin{dmath*}
	G_{\mu_\bk+m-b}
	\left(
		z_\bk,
		\psi_\bk
	\right)
	=
	\sum_{P=-\infty}^{\infty}
	\ee^{-\ii P\psi_\bk}
	\frac{
		J_P(z_\bk)
	}{
		P+\mu_\bk+m-b
	}
\end{dmath*}.
Setting
\begin{dmath*}
	N
	\defeq
	P+m-b
\end{dmath*},
or equivalently \(P=N-m+b\), gives
\begin{dmath*}
	P+\mu_\bk+m-b
	=
	N+\mu_\bk
\end{dmath*}.
Therefore
\begin{dmath*}
	\frac{1}{
		P+\mu_\bk+m-b
	}
	=
	\frac{\wc}{
		N\wc-\left(
			\Omega_\bk-k_\parallel v_\parallel
		\right)
	}
\end{dmath*}.
At this stage the bivariate formula gives the intermediate harmonic form
\begin{dmath*}
	\tilde f_{\bk,N}^{(2)}
	=
	\ii\frac{e}{m}
	\frac{1}{
		N\wc-\left(
			\Omega_\bk-k_\parallel v_\parallel
		\right)
	}
	\frac{1}{V}
	\sum_\bp{}'
	\sum_\bq{}'
	\delta_{\bq,\,\bk-\bp}
	\frac{1}{\Omega_\bp} \cdot \\
	\sum_{b\in\mathcal I}
	\sum_{m=-\infty}^{\infty}
	J_{N-m+b}(z_\bk)
	\ee^{-\ii m(\phi_\bk-\phi_\bq)}
	\ee^{\ii b(\phi_\bk-\phi_\bp)}
	\mathcal D_{\bp,b}^{(m)}
	f_{\bq,m}^{(1)}
\end{dmath*}.
This formula is already the harmonic expansion of the bivariate response.
It remains only to rewrite the full inner harmonic \(f_{\bq,m}^{(1)}\) in
terms of the reduced inner harmonic coefficients used in the traditional
formula.  By definition,
\begin{dmath*}
	f_\bq^{(1)}
	=
	\ee^{\ii z_\bq\sin\psi_\bq}
	\sum_{M=-\infty}^{\infty}
	\ee^{-\ii M\psi_\bq}
	\tilde f_{\bq,M}^{(1)}
\end{dmath*},
and therefore
\begin{dmath*}
	f_{\bq,m}^{(1)}
	=
	\sum_{M=-\infty}^{\infty}
	J_{M-m}(z_\bq)
	\tilde f_{\bq,M}^{(1)}
\end{dmath*}.
The differential operator \(\mathcal D_{\bp,b}^{(m)}\) acts on the product
\(J_{M-m}(z_\bq)\tilde f_{\bq,M}^{(1)}\).  The only identities needed to
move the derivatives through the Bessel factor are
\begin{dmath*}
	\Omega_\bp\mathcal G_\bp J_{M-m}(z_\bq)
	=
	-\mu_\bp q_\perp J'_{M-m}(z_\bq)
\end{dmath*},
\begin{dmath*}
	\mathcal G J_{M-m}(z_\bq)
	=
	\frac{v_\parallel q_\perp}{\wc}
	J'_{M-m}(z_\bq)
\end{dmath*},
and
\begin{dmath*}
	J'_s(z_\bq)
	=
	\frac{1}{2}
	\left[
		J_{s-1}(z_\bq)
		-
		J_{s+1}(z_\bq)
	\right]
\end{dmath*}.
After these elementary reductions, all terms contain sums of the form
\begin{dmath*}
	\sum_{m=-\infty}^{\infty}
	J_{N-m+b}(z_\bk)
	J_{M-m}(z_\bq)
	\ee^{-\ii m(\phi_\bk-\phi_\bq)}
	\ee^{\ii b(\phi_\bk-\phi_\bp)}
\end{dmath*}.
The remaining sum over the intermediate harmonic \(m\) is evaluated by
Graf's addition theorem.  With the phase convention used here and
\(\bk=\bp+\bq\), this theorem may be written as
\begin{dmath*}
	\ee^{\ii l(\phi_\bp-\phi_\bk)}
	J_l(z_\bp)
	=
	\sum_{h=-\infty}^{\infty}
	J_{l+h}(z_\bk)
	J_h(z_\bq)
	\ee^{\ii h(\phi_\bk-\phi_\bq)}
\end{dmath*},
see, for instance, \cite[Eq.~10.23.7]{NIST:DLMF}.  Taking
\begin{dmath*}
	l
	= {
	N-M+b,
	\qquad
	h
	=
	M-m
	}
\end{dmath*},
one obtains
\begin{dmath*}
	\sum_{m=-\infty}^{\infty}
	J_{N-m+b}(z_\bk)
	J_{M-m}(z_\bq)
	\ee^{-\ii m(\phi_\bk-\phi_\bq)}
	\ee^{\ii b(\phi_\bk-\phi_\bp)}
	=
	J_{N-M+b}(z_\bp)
	\ee^{-\ii\left[
		N\phi_\bk
		-
		(N-M)\phi_\bp
		-
		M\phi_\bq
	\right]}
\end{dmath*}.
This is the precise point at which the \(z_\bp\)-Bessel factor of the
classical formula appears.  It is produced by the Graf contraction over the
intermediate full harmonic \(m\).

Evaluating the three possible values of \(b\) now gives the three blocks of
the classical formula.  For \(b=1\),
\begin{dmath*}
	\mathcal L_{NM,1}^{(\bp)}
	=
	J_{N-M+1}(z_\bp)
	\left[
		\frac{\ee^{\ii\phi_\bp}}{\sqrt2}
		E_{\bp,1}
		\left(
			\Omega_\bp\mathcal G_\bp
			-
			M
			\frac{\wc\mu_\bp}{v_\perp}
		\right)
		+
		\frac{p_\perp E_{\bp,0}}{2}
		\left(
			\mathcal G
			+
			M
			\frac{v_\parallel}{v_\perp}
		\right)
	\right]
\end{dmath*}.
For \(b=-1\),
\begin{dmath*}
	\mathcal L_{NM,-1}^{(\bp)}
	=
	J_{N-M-1}(z_\bp)
	\left[
		\frac{\ee^{-\ii\phi_\bp}}{\sqrt2}
		E_{\bp,-1}
		\left(
			\Omega_\bp\mathcal G_\bp
			+
			M
			\frac{\wc\mu_\bp}{v_\perp}
		\right)
		+
		\frac{p_\perp E_{\bp,0}}{2}
		\left(
			\mathcal G
			-
			M
			\frac{v_\parallel}{v_\perp}
		\right)
	\right]
\end{dmath*}.
For \(b=0\),
\begin{dmath*}
	\mathcal L_{NM,0}^{(\bp)}
	=
	J_{N-M}(z_\bp)
	\left[
		\Omega_\bp E_{\bp,0}
		\partial_{v_\parallel}
		+
		M
		\frac{p_\perp}{\sqrt2}
		\left(
			\ee^{-\ii\phi_\bp}E_{\bp,-1}
			-
			\ee^{\ii\phi_\bp}E_{\bp,1}
		\right)
	\right]
\end{dmath*}.
Thus
\begin{dmath*}
	\mathcal L_{NM}^{(\bp)}
	=
	\mathcal L_{NM,1}^{(\bp)}
	+
	\mathcal L_{NM,-1}^{(\bp)}
	+
	\mathcal L_{NM,0}^{(\bp)}
\end{dmath*}.
Substitution gives
\begin{dmath*}
	f_\bk^{(2)}(\bv)
	=
	\ee^{\ii z_\bk\sin\psi_\bk}
	\sum_{N=-\infty}^{\infty}
	\ee^{-\ii N\psi_\bk}
	\tilde f_{\bk,N}^{(2)}
\end{dmath*},
with
\begin{dmath*}
	\tilde f_{\bk,N}^{(2)}
	=
	\ii\frac{e}{m}
	\frac{1}{
		N\wc-\left(
			\Omega_\bk-k_\parallel v_\parallel
		\right)
	}
	\frac{1}{V}
	\sum_\bp{}'
	\sum_\bq{}'
	\delta_{\bq,\,\bk-\bp}
	\frac{1}{\Omega_\bp} \cdot \\
	\sum_{M=-\infty}^{\infty}
	\ee^{-\ii\left[
		N\phi_\bk
		-
		(N-M)\phi_\bp
		-
		M\phi_\bq
	\right]}
	\mathcal L_{NM}^{(\bp)}
	\tilde f_{\bq,M}^{(1)}
\end{dmath*}.
This is exactly the traditional double-harmonic formula derived independently
in \cref{sec:derivation}.  The proof shows explicitly that the
bivariate formula is not a different approximation: it is the same harmonic
response before the final expansion of the inner full harmonic and the Graf
contraction over the intermediate index \(m\).

\section{Projected kernels and nonlinear susceptibility}\label{sec:susceptibility}

The bivariate representation obtained above gives a compact expression for the second-order distribution function.  The nonlinear susceptibility tensor is obtained by projecting this distribution on the velocity moments entering the current.  In this section we perform this projection without returning to the double-harmonic expansion.  The result is a susceptibility written directly in terms of projected bivariate kernels.  This form keeps the orbit structure of the nonlinear response explicit and isolates the angular contractions into a small number of reusable special-function blocks.

The second-order nonlinear plasma susceptibility is the third-rank tensor implicitly defined by the relation  \citep{Sitenko1982}:
\begin{dmath*}
	\frac{1}{V}\sum_\bp{}^'\sum_\bq{}^'\delta_{\bq, \bk-\bp}\,{\chi}^{(2)}_{ijk}(\bp, \Omega_\bp; \bq, \Omega_\bq)\,E_{\bp,j}\,E_{\bq,k}= {\frac{\ii}{\epsilon_0 \Omega_\bk}\,j^{(2)}_{\bk,i} }
\end{dmath*},
where Einstein's convention is adopted for repeated indices.
The determination of the nonlinear susceptibility requires the calculation of the second-order total nonlinear current density.

For a single species the nonlinear current is
\begin{dmath*}
	\bj_\bk^{(2)}
	=
	e
	\int
	\bv
	f_\bk^{(2)}(\bv)
	\dd^3v
\end{dmath*}.
Since
\begin{dmath}\label{eq:v_circular}
	\bv
	=
	\frac{v_\perp}{\sqrt2}
	\left(
		\ee^{-\ii\phi}\mathbf e_L
		+
		\ee^{\ii\phi}\mathbf e_R
	\right)
	+
	v_\parallel\mathbf b
\end{dmath},
the angular integral selects only fixed gyrophase Fourier components of the full distribution.
Thus the susceptibility calculation only requires angular projections of the bivariate kernels and of their derivatives.

Let
\begin{dmath*}
	\Delta
	\defeq
	\phi_\bk-\phi_\bq
\end{dmath*}.
Along the output angular integration one has
\begin{dmath*}
	\psi_\bq
	=
	\psi_\bk-\Delta
\end{dmath*}.
For the elementary block labelled by \(a\in\mathcal I\) and \(b\in\mathcal I\), we define the projected bivariate kernel
\begin{dmath*}
	\mathcal K_N^{(r,a)}
	\left(
		z_\bk,z_\bq,\Delta;
		\mu_\bk,\nu
	\right)
	\defeq
	\frac{1}{2\pi}
	\int_0^{2\pi}
	\ee^{\ii N\psi_\bk}
	\ee^{\ii z_\bk\sin\psi_\bk}
	\ee^{\ii a(\psi_\bk-\Delta)}
	G_{\mu_\bk,\nu}^{(r)}
	\left(
		z_\bk,\psi_\bk;
		z_\bq,\psi_\bk-\Delta
	\right)
	\dd\psi_\bk
\end{dmath*}.
This is the angular Fourier projection of exactly the factor that appears in the current moment.
In the calculation of the susceptibility we have specifically
\begin{dmath*}
	{
		r=a+b,
		\qquad
		\nu=\mu_\bq-a
	}
\end{dmath*}.
The integer \(N\) is the output harmonic selected by the current projection. With the convention used in the definition of \(\mathcal K_N^{(r,a)}\), the angular factors coming from the velocity projection are \(N_1=-1\), \(N_{-1}=1\), and \(N_0=0\). Equivalently, the left and right current components select the \(-1\) and \(+1\) harmonics of the full distribution, respectively.

Using the superposition formula
\begin{dmath*}
	G_{\mu,\nu}^{(r)}
	\left(
		z,\psi;
		w,\chi
	\right)
	=
	\sum_{\ell=-\infty}^{\infty}
	T_\ell(w,\nu)
	\ee^{-\ii\ell\chi}
	G_{\mu+\ell-r}
	\left(
		z,
		\psi
	\right)
\end{dmath*},
one obtains the coefficient representation
\begin{dmath*}
	\mathcal K_N^{(r,a)}
	\left(
		z_\bk,z_\bq,\Delta;
		\mu_\bk,\nu
	\right)
	=
	\sum_{\ell=-\infty}^{\infty}
	\ee^{\ii(\ell-a)\Delta}
	T_\ell(z_\bq,\nu)
	T_{N+a-\ell}
	\left(
		z_\bk,
		\mu_\bk+\ell-r
	\right)
\end{dmath*},
where the one-variable projected coefficient is
\begin{dmath*}
	T_M(z,\eta)
	\defeq
	\frac{1}{2\pi}
	\int_0^{2\pi}
	\ee^{\ii M\psi}
	\ee^{\ii z\sin\psi}
	G_{\eta}
	\left(
		z,
		\psi
	\right)
	\dd\psi
\end{dmath*}.
Equivalently, after setting \(n=\ell-a\),
\begin{dmath*}
	\mathcal K_N^{(r,a)}
	\left(
		z_\bk,z_\bq,\Delta;
		\mu_\bk,\nu
	\right)
	=
	\sum_{n=-\infty}^{\infty}
	\ee^{\ii n\Delta}
	T_{n+a}(z_\bq,\nu)
	T_{N-n}
	\left(
		z_\bk,
		\mu_\bk+n+a-r
	\right)
\end{dmath*}.
This is the projected analogue of the bivariate function. It is the object that replaces the full two-angle kernel in the current moment.

The derivative blocks appearing in the nonlinear distribution have equally simple projections. Since
\begin{dmath*}
	\psi_\bq
	=
	\psi_\bk-\Delta
\end{dmath*},
differentiation with respect to the inner gyrophase at fixed \(\psi_\bk\) is equivalent to
\begin{dmath*}
	\partial_{\psi_\bq}
	=
	-
	\partial_\Delta
\end{dmath*}.
Moreover,
\begin{dmath*}
	\ee^{\ii a\psi_\bq}
	\left(
		\partial_{\psi_\bq}
		+
		\ii a
	\right)
	G_{\mu_\bk,\nu}^{(r)}
	=
	\partial_{\psi_\bq}
	\left[
		\ee^{\ii a\psi_\bq}
		G_{\mu_\bk,\nu}^{(r)}
	\right]
\end{dmath*}.
Therefore its projected form is
\begin{dmath*}
	\frac{1}{2\pi}
	\int_0^{2\pi}
	\ee^{\ii N\psi_\bk}
	\ee^{\ii z_\bk\sin\psi_\bk}
	\ee^{\ii a(\psi_\bk-\Delta)}
	\left(
		\partial_{\psi_\bq}
		+
		\ii a
	\right)
	G_{\mu_\bk,\nu}^{(r)}
	\left(
		z_\bk,\psi_\bk;
		z_\bq,\psi_\bk-\Delta
	\right)
	\dd\psi_\bk
	=
	-
	\partial_\Delta
	\mathcal K_N^{(r,a)}
\end{dmath*}.
Similarly,
\begin{dmath*}
	\frac{1}{2\pi}
	\int_0^{2\pi}
	\ee^{\ii N\psi_\bk}
	\ee^{\ii z_\bk\sin\psi_\bk}
	\ee^{\ii a(\psi_\bk-\Delta)}
	\partial_{z_\bq}
	G_{\mu_\bk,\nu}^{(r)}
	\left(
		z_\bk,\psi_\bk;
		z_\bq,\psi_\bk-\Delta
	\right)
	\dd\psi_\bk
	=
	\partial_{z_\bq}
	\mathcal K_N^{(r,a)}
\end{dmath*},
and
\begin{dmath*}
	\frac{1}{2\pi}
	\int_0^{2\pi}
	\ee^{\ii N\psi_\bk}
	\ee^{\ii z_\bk\sin\psi_\bk}
	\ee^{\ii a(\psi_\bk-\Delta)}
	\partial_\nu
	G_{\mu_\bk,\nu}^{(r)}
	\left(
		z_\bk,\psi_\bk;
		z_\bq,\psi_\bk-\Delta
	\right)
	\dd\psi_\bk
	=
	\partial_\nu
	\mathcal K_N^{(r,a)}
\end{dmath*}.
Thus all angular contractions required by the nonlinear current are finite combinations of
\begin{dmath*}
	{
		\mathcal K_N^{(r,a)},
		\qquad
		\partial_\Delta\mathcal K_N^{(r,a)},
		\qquad
		\partial_{z_\bq}\mathcal K_N^{(r,a)},
		\qquad
		\partial_\nu\mathcal K_N^{(r,a)}
	}
\end{dmath*}.
We note that the derivative with respect to the relative angle may also be written in terms of the multiplier operator
\begin{dmath*}
	\mathfrak D_\Delta
	\defeq
	-
	\ii\partial_\Delta
\end{dmath*},
so that
\begin{dmath*}
	\mathfrak D_\Delta
	\mathcal K_N^{(r,a)}
	=
	\sum_{n=-\infty}^{\infty}
	n
	\ee^{\ii n\Delta}
	T_{n+a}(z_\bq,\nu)
	T_{N-n}
	\left(
		z_\bk,
		\mu_\bk+n+a-r
	\right)
\end{dmath*}.

The bivariate representation gives the nonlinear current by substituting the reduced distribution of \cref{sec:bivariate-distribution} into the current moment and applying the projected kernels above. More explicitly, each occurrence of
\begin{dmath*}
	\ee^{\ii a\psi_\bq}
	G_{\mu_\bk,\mu_\bq-a}^{(a+b)}
	\left(
		z_\bk,\psi_\bk;
		z_\bq,\psi_\bq
	\right)
\end{dmath*}
is replaced, in the \(N\)-th current projection, by
\begin{dmath*}
	\mathcal K_N^{(a+b,a)}
	\left(
		z_\bk,z_\bq,\Delta;
		\mu_\bk,\mu_\bq-a
	\right)
\end{dmath*}.
Likewise,
\begin{dmath*}
	\ee^{\ii a\psi_\bq}
	\left(
		\partial_{\psi_\bq}
		+
		\ii a
	\right)
	G_{\mu_\bk,\mu_\bq-a}^{(a+b)}
\end{dmath*}
is replaced by
\begin{dmath*}
	-
	\partial_\Delta
	\mathcal K_N^{(a+b,a)}
	\left(
		z_\bk,z_\bq,\Delta;
		\mu_\bk,\mu_\bq-a
	\right)
\end{dmath*},
while the \(z_\bq\)- and \(\nu\)-derivative terms are replaced by
\begin{dmath*}
	\partial_{z_\bq}
	\mathcal K_N^{(a+b,a)}
	\left(
		z_\bk,z_\bq,\Delta;
		\mu_\bk,\mu_\bq-a
	\right)
\end{dmath*}
and
\begin{dmath*}
	\left.
	\partial_\nu
	\mathcal K_N^{(a+b,a)}
	\left(
		z_\bk,z_\bq,\Delta;
		\mu_\bk,\nu
	\right)
	\right|_{\nu=\mu_\bq-a}
\end{dmath*}.
Consequently the nonlinear current is built from the projected family
\begin{dmath*}
{
	\mathcal K_N^{(a+b,a)}
	\left(
		z_\bk,z_\bq,\Delta;
		\mu_\bk,\mu_\bq-a
	\right),
	\qquad
	a, b\in\mathcal I
	}
\end{dmath*},
together with the three projected derivatives
\begin{dmath*}
	{
		\partial_\Delta,
		\qquad
		\partial_{z_\bq},
		\qquad
		\partial_\nu
	}
\end{dmath*}.
This is exactly parallel to the linear susceptibility calculation. In the linear case the angular contractions are generated by the one-variable incomplete-Bessel function \(G_\mu\). In the first nonlinear case they are generated by the projected bivariate kernels \(\mathcal K_N^{(r,a)}\), which encode the outer characteristic propagation and the full inner linear orbit response in a single object.

We now make the current-response formula explicit.  We use the circular basis
\begin{dmath*}
	\sigma,\rho
	\in {
	\left\{
		1,-1,0
	\right\}
	}
\end{dmath*}
and decompose the coefficients of the \(\bp\)-field as
\begin{dmath*}
	C_{\bp,b}^{(\alpha)}
	= {
	\sum_{\sigma\in\{1,-1,0\}}
	E_{\bp,\sigma}
	C_{\bp,b;\sigma}^{(\alpha)},
	\qquad
	\alpha\in\{\phi,\perp,\parallel\}
	}
\end{dmath*}.
The indices \(b\) and \(\sigma\) take values in the same set but have different meanings: \(b\) labels the gyrophase block of the nonlinear force operator, whereas \(\sigma\) labels the actual polarisation component of the field \(E_{\bp,\sigma}\).

Similarly, the inner amplitudes are written as
\begin{dmath*}
	X_{\bq,a}
	=
	\sum_{\rho\in\{1,-1,0\}}
	E_{\bq,\rho}
	X_{\bq,a;\rho}
\end{dmath*}.
The coefficients \(C_{\bp,b;\sigma}^{(\alpha)}\) are read off from the definitions of \(C_{\bp,c}^{(\alpha)}\), while \(X_{\bq,a;\rho}\) are read off from the linear response amplitudes.  In the following explicit formulas the species label is suppressed.

Explicitly,
\begin{dmath*}
	C_{\bp,b;1}^{(\phi)}
	=
	-\ii b
	\delta_{b,1}
	\frac{\ee^{\ii b\phi_\bp}}{\sqrt2}
	\frac{\mu_\bp\wc}{v_\perp}
	+
	\sum_{c=\pm1}
	\delta_{b+c,1}
	\frac{1-bc}{2}
	\left[
		-
		(1-b^2)
		\ii p_\perp c\sqrt2
		\ee^{\ii c\phi_\bp}
	\right]
\end{dmath*},
\begin{dmath*}
	C_{\bp,b;-1}^{(\phi)}
	=
	-\ii b
	\delta_{b,-1}
	\frac{\ee^{\ii b\phi_\bp}}{\sqrt2}
	\frac{\mu_\bp\wc}{v_\perp}
	+
	\sum_{c=\pm1}
	\delta_{b+c,-1}
	\frac{1-bc}{2}
	\left[
		-
		(1-b^2)
		\ii p_\perp c\sqrt2
		\ee^{\ii c\phi_\bp}
	\right]
\end{dmath*},
\begin{dmath*}
	C_{\bp,b;0}^{(\phi)}
	=
	\sum_{c=\pm1}
	\delta_{b+c,0}
	\frac{1-bc}{2}
	\left[
		-
		\ii c b^2
		\frac{p_\perp v_\parallel}{2v_\perp}
	\right]
\end{dmath*},
\begin{dmath*}
	C_{\bp,b;1}^{(\perp)}
	=
	b^2
	\delta_{b,1}
	\frac{\ee^{\ii b\phi_\bp}}{\sqrt2}
	\left(
		\Omega_\bp
		-
		p_\parallel v_\parallel
	\right)
\end{dmath*},
\begin{dmath*}
	C_{\bp,b;-1}^{(\perp)}
	=
	b^2
	\delta_{b,-1}
	\frac{\ee^{\ii b\phi_\bp}}{\sqrt2}
	\left(
		\Omega_\bp
		-
		p_\parallel v_\parallel
	\right)
\end{dmath*},
\begin{dmath*}
	C_{\bp,b;0}^{(\perp)}
	=
	\sum_{c=\pm1}
	\delta_{b+c,0}
	\frac{1-bc}{2}
	b^2
	\frac{p_\perp v_\parallel}{2}
\end{dmath*},
\begin{dmath*}
	C_{\bp,b;1}^{(\parallel)}
	=
	\delta_{b,1}
	b^2
	\frac{\ee^{\ii b\phi_\bp}}{\sqrt2}
	p_\parallel v_\perp
\end{dmath*},
\begin{dmath*}
	C_{\bp,b;-1}^{(\parallel)}
	=
	\delta_{b,-1}
	b^2
	\frac{\ee^{\ii b\phi_\bp}}{\sqrt2}
	p_\parallel v_\perp
\end{dmath*},
and
\begin{dmath*}
	C_{\bp,b;0}^{(\parallel)}
	=
	\delta_{b,0}
	\Omega_\bp
	-
	\sum_{c=\pm1}
	\delta_{b+c,0}
	\frac{1-bc}{2}
	b^2
	\frac{p_\perp v_\perp}{2}
\end{dmath*}.
The inner amplitudes are
\begin{dmath*}
	X_{\bq,1;1}
	= {
	\frac{\ee^{\ii\phi_\bq}}{\sqrt2}
	\mathcal G_{\bq,\Omega_\bq}f_0,
	\qquad
	X_{\bq,1;-1}
	=
	0,
	\qquad
	X_{\bq,1;0}
	=
	\frac{q_\perp}{2\Omega_\bq}
	\mathcal H f_0
	}
\end{dmath*},
\begin{dmath*}
	X_{\bq,-1;1}
	= {
	0,
	\qquad
	X_{\bq,-1;-1}
	=
	\frac{\ee^{-\ii\phi_\bq}}{\sqrt2}
	\mathcal G_{\bq,\Omega_\bq}f_0,
	\qquad
	X_{\bq,-1;0}
	=
	\frac{q_\perp}{2\Omega_\bq}
	\mathcal H f_0
	}
\end{dmath*},
and
\begin{dmath*}
	X_{\bq,0;1}
	= {
	0,
	\qquad
	X_{\bq,0;-1}
	=
	0,
	\qquad
	X_{\bq,0;0}
	=
	\partial_{v_\parallel}f_0
	}
\end{dmath*}.
For each projected current component, introduce the selector
\begin{dmath*}
	{
		S_1(v_\perp)
		\defeq
		\frac{v_\perp}{\sqrt2},
		\qquad
		S_{-1}(v_\perp)
		\defeq
		\frac{v_\perp}{\sqrt2},
		\qquad
		S_0(v_\parallel)
		\defeq
		v_\parallel
	}
\end{dmath*}.
The corresponding output harmonic indices are
\begin{dmath*}
	{
		N_1=-1,
		\qquad
		N_{-1}=1,
		\qquad
		N_0=0
	}
\end{dmath*}.
Then, after performing the gyrophase projection, the single-species current-response coefficient in the circular basis becomes
\begin{dmath*}
	\mathscr J_{\tau\sigma\rho}^{(2,s)}
	\left(
		\bk;\bp,\bq
	\right)
	=
	-
	\frac{e_s^3}{m_s^2\wc_s^2}
	\frac{1}{\Omega_\bp}
	\sum_{a\in\mathcal I}
	\sum_{b\in\mathcal I}
	\int_0^\infty
	\int_{-\infty}^{\infty}
	S_\tau
	\Bigg[
		\left(
			C_{\bp,b;\sigma}^{(\phi,s)}
			\mathbb K_{\bk,\bq,a,b;\tau}^{(\phi,s)}
			+
			C_{\bp,b;\sigma}^{(\perp,s)}
			\frac{q_\perp}{\wc_s}
			\mathbb K_{\bk,\bq,a,b;\tau}^{(\perp,s)}
			+
			C_{\bp,b;\sigma}^{(\parallel,s)}
			\frac{q_\parallel}{\wc_s}
			\mathbb K_{\bk,\bq,a,b;\tau}^{(\parallel,s)}
		\right)
		X_{\bq,a;\rho}^{(s)}
		+
		\mathbb K_{\bk,\bq,a,b;\tau}^{(s)}
		\left[
			C_{\bp,b;\sigma}^{(\perp,s)}
			\partial_{v_\perp}
			+
			C_{\bp,b;\sigma}^{(\parallel,s)}
			\partial_{v_\parallel}
		\right]
		X_{\bq,a;\rho}^{(s)}
	\Bigg]
	v_\perp
	\dd v_\perp
	\dd v_\parallel
\end{dmath*},
where $\bk = \bp + \bq$ is understood, $s$ identifies the species and
\begin{dmath*}
	\mathbb K_{\bk,\bq,a,b;\tau}^{(s)}
	\defeq
	\mathcal K_{N_\tau+b}^{(a+b,a)}
	\left(
		z_{\bk,s},z_{\bq,s},\Delta;
		\mu_{\bk,s},\mu_{\bq,s}-a
	\right)
\end{dmath*},
\begin{dmath*}
	\mathbb K_{\bk,\bq,a,b;\tau}^{(\phi,s)}
	\defeq
	-
	\partial_\Delta
	\mathcal K_{N_\tau+b}^{(a+b,a)}
	\left(
		z_{\bk,s},z_{\bq,s},\Delta;
		\mu_{\bk,s},\mu_{\bq,s}-a
	\right)
\end{dmath*},
\begin{dmath*}
	\mathbb K_{\bk,\bq,a,b;\tau}^{(\perp,s)}
	\defeq
	\partial_{z_{\bq,s}}
	\mathcal K_{N_\tau+b}^{(a+b,a)}
	\left(
		z_{\bk,s},z_{\bq,s},\Delta;
		\mu_{\bk,s},\mu_{\bq,s}-a
	\right)
\end{dmath*},
\begin{dmath*}
	\mathbb K_{\bk,\bq,a,b;\tau}^{(\parallel,s)}
	\defeq
	\left.
	\partial_\nu
	\mathcal K_{N_\tau+b}^{(a+b,a)}
	\left(
		z_{\bk,s},z_{\bq,s},\Delta;
		\mu_{\bk,s},\nu
	\right)
	\right|_{\nu=\mu_{\bq,s}-a}
\end{dmath*}.
The total nonlinear current is the sum of the single-species current coefficients.  Hence, comparing with the defining relation for the nonlinear susceptibility, one obtains
\begin{dmath*}
	\chi_{\tau\sigma\rho}^{(2)}
	\left(
		\bp,\Omega_\bp;
		\bq,\Omega_\bq
	\right)
	=
	\frac{\ii}{\epsilon_0\Omega_\bk}
	\sum_s
	\mathscr J_{\tau\sigma\rho}^{(2,s)}
	\left(
		\bk;\bp,\bq
	\right)
\end{dmath*}.
Here \(\tau\in\{1,-1,0\}\) labels the output current component.  This formula gives the nonlinear susceptibility explicitly as a finite sum over the nine elementary bivariate blocks \((a,b)\), followed by a two-dimensional velocity integral. The Cartesian tensor is obtained from \(\chi_{\tau\sigma\rho}^{(2)}\) by the fixed change of basis between the circular basis \(\{\mathbf e_1,\mathbf e_{-1},\mathbf e_0\}\) and the Cartesian basis.

\section{Outlook toward higher orders}\label{sec:higher-orders}

The factorisation
\begin{dmath*}
	f_\bk^{(2)}
	=
	\ee^{\ii z_\bk\sin\psi_\bk}
	\tilde f_\bk^{(2)}
\end{dmath*}
has the same form as the linear factorisation. This is not accidental. In the weakly nonlinear hierarchy considered here, the left-hand characteristic operator is the same at every perturbative order; only the source term changes. Consequently, once the source at a given order has been written in terms of orbit kernels, the next order is obtained by applying the same characteristic integral once more.

The linear response is governed by the one-variable incomplete-Bessel kernel
\begin{dmath*}
	G_{\mu_\bk}
	\left(
		z_\bk,
		\psi_\bk
	\right)
\end{dmath*}.
The first nonlinear response is governed by the bivariate kernel
\begin{dmath*}
	G_{\mu_\bk,\mu_\bq-a}^{(a+b)}
	\left(
		z_\bk,\psi_\bk;
		z_\bq,\psi_\bq
	\right)
\end{dmath*},
which couples the outer characteristic of the output mode \(\bk\) to the inner linear orbit associated with the mode \(\bq\). At the next perturbative step, the nonlinear source contains the first nonlinear response itself. Applying the outer characteristic integral to this source therefore produces a trivariate kernel, schematically of the form
\begin{dmath*}
	G_{\mu_\bk,\mu_{\bq_1},\mu_{\bq_2}}^{(\mathbf r)}
	\left(
		z_\bk,\psi_\bk;
		z_{\bq_1},\psi_{\bq_1};
		z_{\bq_2},\psi_{\bq_2}
	\right)
\end{dmath*},
with discrete shifts \(\mathbf r\) determined by the circular and parallel polarisation blocks appearing in the nested nonlinear source.

More generally, within the fixed-characteristic perturbative hierarchy, the \(n\)-th order distribution is expected to be expressible in terms of \(n\)-variate orbit kernels. These kernels are obtained recursively by adding one characteristic time integral for each nonlinear interaction. Their harmonic expansions reproduce the usual multiple cyclotron-harmonic sums, while their projected versions provide the angular contractions needed for the corresponding higher-order susceptibility tensors.

This recursive statement has the standard limitation of weakly nonlinear response theory. It assumes that the nonlinear corrections remain on the right-hand side of the Vlasov equation, so that the unperturbed cyclotron characteristics are not modified. If nonlinear orbit corrections, trapping, ponderomotive drifts, or self-consistent changes of the guiding-centre motion are promoted to the left-hand side, then the characteristic operator itself changes and the simple incomplete-Bessel recursion is deformed.

Thus the present construction should be understood as a compact orbit-kernel reformulation of the fixed-characteristic perturbative hierarchy. In this setting the first nonlinear susceptibility is generated by projected bivariate kernels, and higher-order susceptibilities should be generated by projected multivariate kernels. Whether these higher-rank projected kernels admit closed forms as compact as those obtained here depends on the corresponding angular-contraction calculus.

\section{Conclusion}\label{sec:conclusion}

We have reorganised the first nonlinear Vlasov--Maxwell response around the characteristic structure of the problem.  The linear response is obtained by applying one cyclotron orbit integral to one Larmor phase, producing the incomplete-Bessel function \(G_\mu\).  The first nonlinear response is obtained by applying an outer cyclotron orbit integral to an inner \(G_{\mu_\bq}\)-resolved linear response.  This produces the bivariate incomplete-Bessel function
\begin{dmath*}
	G_{\mu,\nu}^{(r)}
	\left(
		z,\psi;
		w,\chi
	\right)
\end{dmath*}.
The bivariate function is therefore not an auxiliary notation, but the natural orbit-resolvent kernel associated with the first nonlinear cyclotron response.

The resulting nonlinear distribution function is explicit and preserves all physical factors of the original calculation through compact differential blocks.  Expanding the bivariate function recovers a generalisation of the traditional double-harmonic Liu--Tripathi expression, while keeping it unexpanded gives a direct route to the nonlinear susceptibility tensor through projected bivariate angular contractions.  In this sense the bivariate formulation is equivalent to the harmonic representation, but reorganises it around the geometry of the characteristic flow.

The projected kernels introduced above are also relevant for numerical implementation.  In the traditional formulation one has to handle double cyclotron-harmonic sums, together with several shifted Bessel factors and derivative-generated recurrence terms.  In the present formulation these structures are absorbed into a small family of projected kernels,
\begin{dmath*}
	{
		\mathcal K_N^{(r,a)},
		\qquad
		\partial_\Delta\mathcal K_N^{(r,a)},
		\qquad
		\partial_{z_\bq}\mathcal K_N^{(r,a)},
		\qquad
		\partial_\nu\mathcal K_N^{(r,a)}
	}
\end{dmath*}.
These kernels may be evaluated either from their angular-integral definition, from their single-sum coefficient representation in terms of the contracted coefficients \(T_\ell\), or from the original characteristic integral when this is more stable.  This flexibility is expected to be advantageous in regimes where direct double-harmonic summation is slowly convergent, strongly oscillatory, or numerically delicate because of cancellations among shifted harmonics.

The numerical advantage should not be interpreted as a universal reduction of computational cost.  Its usefulness depends on the values of the Larmor radii, detunings, damping prescription, and truncation strategy.  Nevertheless, the bivariate formulation offers a more structured computational target: instead of repeatedly summing and differentiating long harmonic expressions, one can tabulate projected kernels and their derivatives, reuse them across tensor components, and exploit their recurrence, resolvent, and projection identities.  This suggests a practical route toward efficient evaluation of nonlinear susceptibilities in parameter regimes where the harmonic representation is algebraically unwieldy.

The main advantage of the formulation is therefore both structural and potentially computational.  Structurally, it identifies the special-function object underlying the first nonlinear cyclotron response.  Computationally, it replaces the unorganised double-harmonic expansion by a small set of orbit-projected kernels whose analytic properties can be studied independently and then used systematically in the susceptibility tensor.

\section{Declaration of generative AI and AI-assisted technologies in the manuscript preparation process}

During the preparation of this work the author used ChatGPT in order to double-check the correctness of part of the calculations. After using this tool, the author reviewed and edited the content as needed and takes full responsibility for the content of the published article.

\bibliographystyle{unsrtnat}
\bibliography{nonlinear_vlasov_bivariate_incomplete_bessel}

\end{document}